\newtheorem{proposition}{Proposition}[section]
\theoremstyle{definition}
\newtheorem{definition}{Definition}[section]
\theoremstyle{condition}
\DeclareMathAlphabet{\bi}{OML}{cmm}{b}{it}
\DeclareMathAlphabet{\bcal}{OMS}{cmsy}{b}{n}
\DeclareMathAlphabet{\brmn}{OT1}{cmr}{bx}{n}
\DeclareMathSymbol{\R}{\mathalpha}{AMSb}{"52}
\def\x{\mathbf{x}}
\def\X{\mathbf{X}}
\def\Y{\mathbf{Y}}
\def\A{{\cal A}}
\def \a{\mathbf{a}}
\def \v{\mathbf{v}}
\def \y{\mathbf{y}}
\def \z{\mathbf{z}}
\def \p{\mathbf{p}}
\def \d{\mathbf{d}}
\def \q{\mathbf{q}}
\def \s{\mathbf{s}}
\title{A Spectral Estimation Framework for Phase Retrieval via Bregman Divergence Minimization}
\author{Bariscan Yonel, Birsen Yazici}
\affil{Electrical \& Computer Systems Engineering, Rensselaer Polytechnic Institute, Troy, NY}
\affil[]{\{yonelb@rpi.edu, yazici@ecse.rpi.edu \}}
\begin{document}
%
\maketitle
\begin{abstract}
In this paper, we develop a novel framework to optimally design spectral estimators for phase retrieval given measurements realized from an arbitrary model. 
We begin by deconstructing spectral methods, and identify the fundamental mechanisms that inherently promote the accuracy of estimates. 
We then propose a general formalism for spectral estimation as approximate Bregman loss minimization in the range of the lifted forward model that is tractable by a search over rank-1, PSD matrices. 
Essentially, by the Bregman loss approach we transcend the Euclidean sense alignment based similarity measure between phaseless measurements in favor of appropriate divergence metrics over $\mathbb{R}^M_+$. 
To this end, we derive spectral methods that perform approximate minimization of KL-divergence, and the Itakura-Saito distance over
phaseless measurements by using element-wise sample processing functions.
As a result, our formulation relates and extends existing results on model dependent design of optimal sample processing functions in the literature to a model independent sense of optimality.
Numerical simulations confirm the effectiveness of our approach in problem settings under synthetic and real data sets. 

\end{abstract}

\section{Introduction}

Phase retrieval literature has witnessed a major resurgence in the last decade by an influx of methods equipped with theoretical performance guarantees 
despite the ill-posed nature of the problem. 
Most of the compelling theoretical developments in the field rely on certain properties 
that are satisfied by statistical models such as Gaussian sampling, and coded diffraction patterns with high probability and are relevant to problems in wide range of applications such as optical imaging and astronomy, blind channel estimation and equalization, X-Ray crystallography, quantum state tomography, and subspace tracking. 

In this paper, we assume the setting of a \emph{starved regime}, which requires successfully recovering an object of interest $\x \in \mathbb{C}^N$ given $M = \mathcal{O}(N) \ll \mathcal{O}(N^2)$ (realistically near the information theoretic limits of injectivity $M \geq 4N-4$ \cite{bandeira2014saving}) intensity only measurements of the form:
\begin{equation}\label{eq:phaless}
y_m = | \langle \a_m, \x \rangle |^2, \quad m = 1, 2, \cdots M, 
\end{equation} 
where $\a_m \in \mathbb{C}^N$ denotes the $m^{th}$ sampling or measurement vector.  

Until recently, the state-of-the-art approaches to exact phase retrieval were lifting-based, convex semi-definite programming methods which solved the inversion from \eqref{eq:phaless} over the positive semi-definite cone in $\mathbb{C}^{N \times N}$, while attaining low sample complexities by leveraging the rank-1 structure of the lifted unknown $\mathbf{X} = \mathbf{x} \mathbf{x}^H$. 
However, the practical drawbacks such as increased computational complexity and memory requirements imposed by the high dimensional search space of the problem lead to the study of algorithms that operate on the original signal domain.  

On the other hand, algorithms that operate on the signal domain employ a variety of techniques to deal with the non-convexity of the problem which arise from having quadratic equality constraints. 
These techniques are based on the first principle of fielding an accurate initial estimate such that the algorithm is guided to a neighborhood of the search space, i.e., a basin of attraction around a global solution. 
For the typically studied statistical measurement models, $\ell_2$-mismatch in the range of the quadratic measurement map in \eqref{eq:phaless} defined as:
\begin{equation}\label{eq:objfun}
f(\z) = \frac{1}{2M} \sum_{m = 1}^M  ( | \langle \a_m, \z \rangle |^2 - y_m )^2,
\end{equation}
is smooth and has sufficient curvature, or satisfies restricted strong convexity within the basin of attraction obtained through an accurate initial estimate. 
These ultimately facilitate the convergence of steepest descent-type iterations on the signal domain, and subsequently, the exact recovery guarantees. 
Therefore, initialization step plays a crucial role in the formalism as well as the performance guarantees of celebrated, seminal signal domain algorithms for phase retrieval, such as the non-convex framework of Wirtinger Flow (WF) \cite{candes2015phase} and its variants \cite{zhang2016reshaped, chen2017solving, chen2017, Zhang2017a, wang2018phase, wang2018solving}, or convex relaxation approach of PhaseMax (PM) \cite{goldstein2018phasemax} and its linear programming based counterparts \cite{bahmani2017flexible, hand2016elementary}. 

For the purpose of exactly solving the original non-convex optimization problem, the most prominent initialization schemes in the literature are \emph{spectral methods}.  
The spectral initialization approach was first proposed in  \cite{netrapalli2013phase} to develop exact recovery guarantees for an alternating minimization approach to phase retrieval by stochastically re-sampling subsets of phaseless measurements at the initialization and in the gradient steps. 
Spectral methods were later adopted as provably good initial estimators in the subsequent exact recovery frameworks of WF and PM which eliminated re-sampling in their guarantees, where the accuracy of spectral initialization played an integral role on their analysis over the original signal domain \cite{netrapalli2013phase, candes2015phase, goldstein2018phasemax}. 
These developments sparked further studies which evaluate and accordingly improve the performance of spectral estimates by the means of sample truncation \cite{chen2017solving, chen2017, Zhang2017a}, design of sample processing functions \cite{lu2017phase, mondelli2017fundamental, luo2019optimal, gao2017phaseless}, or constructing alternative matrices for the spectral estimation of the ground truth, such as with linear spectral estimators \cite{ghods2018linear}, maximal correlation \cite{wang2018phase} and orthogonality promoting methods \cite{wang2018solving, chen2018phase}. 

Truncating high magnitude samples is originally motivated in \cite{chen2015solving, chen2017solving} for conditioning the spectral content of the weighted sample covariance matrix in order to achieve a $\mathcal{O}(N)$ complexity of measurements, rather than the $\mathcal{O}(N \log N)$ complexity of the Gaussian sampling model \cite{yonel2020deterministic}.  
More general sample processing functions were proposed instead of truncation, in which different optimality criteria are considered and achieved for the Gaussian sampling model to guarantee what is referred to as \emph{weak recovery} \cite{luo2019optimal, mondelli2017fundamental}, i.e., a high probability guarantee of a strictly positive correlation coefficient with the underlying ground truth at any sampling complexity above a critical threshold. 
A general class of linear spectral estimators were formulated in \cite{ghods2018linear}, in which the minimum mean squared error optimality criterion on the \emph{lifted} domain is considered which yields an analogue of filtering the phaseless measurements given the instance of a collection of measurement vectors prior to backprojection. 
Orthogonality promoting \cite{wang2017solving} or null-vector initialization methods \cite{chen2018phase, liu2019robust} rely on the near orthogonality properties of high dimensional i.i.d. Gaussian sampling vectors to a ground truth independently sampled from the unit sphere on $\mathbb{C}^N$ with high probability, and extract  
an estimate from the minimal eigenvector of the empirical covariance matrix of the sampling vectors.

In this paper, 
we take an entirely novel approach for the design of spectral methods.
Particularly, we interpret the initialization procedure under a phaseless estimation framework where the spectral method formulation is analogous to an exhaustive search over the non-convex set of rank-1, PSD matrices, which is conducted \emph{tractably} while approximating a loss minimization objective in the range of the lifted forward model of phaseless measurements. 
To this end, we first deconstruct the classical formulations in the literature to identify an inherent mechanism of $\ell_2$ loss minimization over phaseless measurements.
This minimization is approximately conducted by the spectral method under the restricted isometry-type property of \cite{yonel2020deterministic}, as a sufficient condition for exact recovery.
Equivalently, we interpret that the classical formulation is expected to promote an accurate estimate through an Euclidean sense alignment in the range of the lifted forward model in the absence of a proper normalization of phaseless measurements. 
Using this approach, we motivate the sample processing functions pursued in the literature in a unifying perspective of improving the similarity measure over the phaseless measurements.
Subsequently, we derive fundamental principles of phaseless estimation under universal arguments that pertain to generic measurement maps instead of model specific arguments predominantly encountered in the literature. 
We achieve such a formalism by establishing an abstract framework for initial estimation while utilizing \emph{Bregman divergences}, where the sample processing function has the role of \emph{warping} the underlying similarity measure of phaseless measurements to alternative forms. 
Through fundamental property of \emph{Bregman representation}, we establish that a sample processing function can be designed \emph{optimally} given a Bregman divergence of choice, such that the estimator minimizes a surrogate that is formulated under \emph{minimal distortion} from the original objective.  


Our use of Bregman divergences are motivated through their universal properties that by definition generalize to any specification of a strictly convex, differentiable function. 
Indeed for this reason, Bregman's characterization in \cite{bregman1967relaxation} has found extensive use in fields of machine learning \cite{banerjee2005clustering, frigyik2008functional, reid2009surrogate, abernethy2012characterization, reid2011information}, information theory and statistical signal processing \cite{cover1999elements, gray1980distortion, gray1981rate, banerjee2005optimality}, and optimization \cite{censor1981iterative, bauschke2001joint}.
Notably, due to their generic structure, Bregman divergences were proved to be an invaluable tool for problems in signal processing such as compressed sensing and image reconstruction, as they were utilized in the implementation of iterative regularization with total variation and wavelet-based denoising algorithms \cite{yin2008bregman, cai2009linearized}. 
Specific instances of Bregman divergences have also been imperative in various applications in engineering, such as Kullback-Leibler (KL) divergence in communication systems \cite{kullback1997information}, and Itakura-Saito (IS) divergence in speech recognition systems \cite{itakura1975minimum}. 
In particular, KL divergence is widely known as the \emph{discrimination information} between two probability distributions, while motivated as their natural dissimilarity measure through information theory and coding \cite{cover1999elements}.
Similarly, IS divergence is studied for spectral density estimation using auto-regressive models, while known to be asymptotically equivalent to the KL divergence of the ensemble probability distributions when discriminating the spectral densities of two zero mean Gaussian processes \cite{gray1980distortion, gray1981rate}. 
Furthermore, very recently during the development of this paper \cite{vial2020phase} proposed minimizing KL and IS divergence metrics for retrieving audio signals from phaseless STFT measurements using ADMM.

In the context of phase retrieval our motivation to study optimal designs for spectral methods is based on the observations from \cite{yonel2020deterministic}, which identify the classical spectral initialization as a limiting factor in the universal performance guarantees and robustness of the non-convex approach. 
Having strictly non-negative measurements in the form of \eqref{eq:phaless}, we consider the abstract initial estimation as an analogous problem to spectral density or probability density estimation, and pursue information theoretic metrics for improved performance over the conventional quadratic loss-based formulation. 
Accordingly,
using our formalism we derive sample processing functions that are optimal for minimizing the KL divergence and the IS divergence in the range of the lifted forward model through the spectral estimation procedure. 
Notably, we introduce a flexible and inclusive sense of optimality in the design of spectral methods that is metric specific, as opposed to the model specific characterizations in the literature. 
Furthermore, we establish our framework as a generalization of the optimal sample processing functions derived for the Gaussian sampling model in \cite{luo2019optimal}, which appear as an instance of our Itakura-Saito optimal formulation. 
We support our theoretical foundations with numerical simulations that demonstrate the effectiveness of our approach via the particular sample processing functions we develop for phase retrieval. 
For our simulations, we consider two problem settings, \emph{i}) synthetically generated measurements from the Gaussian sampling model, and \emph{ii}) a real optical imaging problem using the data set provided by \cite{metzler2017coherent} which is sensed by solving the double phase retrieval problem \cite{hwang2009fast} for 2D imagery generated by a spatial light modulator.

The rest of the paper is organized as follows: Section \ref{sec:4_sec1} reviews classical spectral estimation and its sample-processing based extensions, and discusses their inherent mechanisms and limitations. Section \ref{sec:4_sec3} presents our formulation of Bregman distance minimizing spectral methods, and the proposed framework for optimally deriving sample processing functions. Section \ref{sec:4_sec4} presents particular instances of optimal sample processing functions derived under our framework for KL, and IS divergences, and discusses the relation of the framework to the optimal sample processing functions in the literature. Section \ref{sec:4_sec5} provides our numerical simulations and \ref{sec:4_sec6} concludes the paper. 

\section{Spectral Methods for Phase Retrieval}\label{sec:4_sec1}
\subsection{Classical Method for Algorithm Initialization}
\subsubsection{Statistical formulation} The original spectral method generates an estimate for the phase retrieval problem as the leading eigenvector of the following positive semi-definite data matrix:
\begin{equation}\label{eq:spectral}
\mathbf{Y} = \frac{1}{{M}}\sum_{m = 1}^{{M}} y_{m} \mathbf{a}_m \mathbf{a}_m^H.
\end{equation}
Motivations of the spectral method are classically rooted in a stochastic perspective.
By the strong law of large numbers, under the assumption that $\mathbf{a}_m \sim \mathcal{N}(0, \frac{1}{2} \mathbf{I}) + \mathrm{i} \mathcal{N}(0, \frac{1}{2} \mathbf{I} ), m = 1, \cdots, M$, the matrix $\mathbf{Y}$ tends to
\begin{equation}
\mathbb{E} \big[ \mathbf{Y}  \big] = \| \x \|^2 \mathbf{I} +  \x \x^H,
\end{equation}
as $M \rightarrow \infty$, which has the true solution $\x$ as its leading eigenvector.
Intuitively, tight concentration of the empirically formed data matrix $\mathbf{Y}$ around its expectation implies an accurate alignment of its leading eigenvector $\v_0$ to the underlying ground truth as:
\begin{equation}\label{eq:wkrec}
\big\| \mathbf{Y} - (\| \x \|^2 \mathbf{I} +  \x \x^H) \big\| \leq \delta \quad \rightarrow \quad \rho(\x, \alpha) := \frac{| \langle \v_0 , \x \rangle |}{\| \x \|} \geq \varepsilon,
\end{equation}
where $\delta$ is a small constant, $\alpha = M/N$ is the oversampling factor, and
\begin{equation}\label{eq:specmet}
\v_0 =  \underset{\|\v \| = 1}{\text{argmax}} \ \v^H \mathbf{Y} \ \v. 
\end{equation}
Ultimately, the goal for using a spectral method is to achieve a sufficiently large $\varepsilon > 0$ to land within a basin of attraction by the initial estimate, such that the algorithm converges to a solution $\x e^{j \phi}$. 
To this end, \eqref{eq:wkrec} is closely related to the tightness of the following restricted isometry-type property of \emph{the lifted forward model} over the set of rank-1, positive semi-definite matrices as \cite{yonel2020deterministic}:
\begin{equation}\label{eq:tightframe}
(2 - \delta) \| {\z \z^H} \|_F^2 \leq \frac{1}{M} \| \mathcal{A}(\z \z^H) \|_2^2 \leq (2 + \delta) \| {\z \z^H} \|_F^2,
\end{equation}
with $\mathcal{A}: \mathbb{C}^{N \times N} \rightarrow \mathbb{C}^M$, where \eqref{eq:phaless} is equivalent to $ \mathcal{A}(\x \x^H) = \y \in \mathbb{R}^M$.

\subsubsection{Approximate $\ell_2$-loss minimization} Using the definition of the lifted forward model, the $\ell_2$-mismatch loss in \eqref{eq:objfun} is transformed into the familiar least-squares formulation as:
\begin{equation}\label{eq:objfun2}
\mathcal{J}(\X) = \frac{1}{2M} \| \mathcal{A}(\X ) - \y \|_2^2.
\end{equation}
Clearly $\X = \x \x^H$ is an optimal solution of \eqref{eq:objfun2}, however in typical inference problems $\mathcal{A}$ has a non-trivial null space as it characterizes a severely underdetermined system of equations with $M \ll N^2$. 
Nonetheless, spectral methods produce accurate estimators of the lifted ground truth $\tilde{\X} = \x \x^H$ under the least squares criterion in \eqref{eq:objfun2} given that $\mathcal{A}$ satisfies certain conditions such as the restricted isometry property provided in \eqref{eq:tightframe}. 

Under \eqref{eq:tightframe}, the spectral method practically becomes a search over the set of rank-1, PSD matrices with such that the $\ell_2$ loss is approximately minimized, as 
\begin{align}\label{eq:spec2}
\mathcal{J}(\z \z^H) &= \frac{1}{2M}\| \mathcal{A} (\z \z^H) \|_2^2 + \frac{1}{2M} \| \y \|_2^2 - \frac{1}{M}  \langle \mathcal{A}(\z \z^H) , \y \rangle, \\
&\approx \| \z \z^H \|_F^2 + \frac{1}{2M} \| \y \|_2^2 - \frac{1}{M}  \langle \mathcal{A}(\z \z^H) , \y \rangle,
\end{align}
if $\delta$ in \eqref{eq:tightframe} is a sufficiently small constant, where the inner-products are real valued from the definition of $\mathcal{A}$ and $\y \in \mathbb{R}_{+}^{M}$.  
Equivalently, using an \emph{a priori} estimate for the energy of the underlying signal by $\lambda_0$, we can write minimization of \eqref{eq:spec2} as
\begin{equation}
\underset{\| \z \| = \sqrt{\lambda_0}}{\text{minimize}} \ \mathcal{J}(\z \z^H) \approx \underset{\| \z \| = \sqrt{\lambda_0}}{\text{maximize}} \ \langle \z \z^H , \frac{1}{M} \mathcal{A}^H (\y )\rangle_F,
\end{equation}
using the definition of the adjoint operator
and the fact that $\langle \z \z^H , \frac{1}{M} \mathcal{A}^H (\y )\rangle_F$ term in \eqref{eq:spec2} is a positive real by definition. 
%

Having fixed the energy of the search variable the maximizer in \eqref{eq:spec2} is equivalent to the leading eigenvector of $\Y$ where
\begin{equation}
\langle \z \z^H , \frac{1}{M} \mathcal{A}^H (\y )\rangle_F = \z^H \left[\frac{1}{M} \mathcal{A}^H (\y )\right]\z,
\end{equation}
from the definition of Frobenius inner-product. 
The adjoint of $\mathcal{A}$ then equivalently defines \emph{the spectral matrix estimate} $\mathbf{Y} \in \mathbb{C}^{N \times N}$ identical to the definition in \eqref{eq:spectral} as:
\begin{equation}\label{eq:backproj}
\Y = \frac{1}{M} \mathcal{A}^H (\y),
\end{equation}
which is recognized as the backprojection estimate of the lifted ground truth $\x \x^H$.
As a result, the classical spectral method constructs an initial estimate as an approximate minimizer of the least squares criterion over the feasible set of rank-1, PSD matrices. 

\subsubsection{Similarity of phaseless measurements}
The spectral method is inherently a search of the domain of rank-1, PSD matrices of the form $\z \z^H$ to attain a match in the range of $\mathcal{A}$, as:
\begin{align}
\v^H \mathbf{Y} \ \v =  \frac{1}{{M}}\sum_{m = 1}^{{M}} y_{m} \v^H \mathbf{a}_m \mathbf{a}_m^H \v  &= \frac{1}{{M}}\sum_{m = 1}^{{M}} y_{m} | \langle \mathbf{a}_m, \v \rangle |^2. 
\end{align}
Inserting a proper scaling by an estimate $\lambda_0$ of the energy of the true solution $\x$ (i.e. $\| \x \|^2$), \eqref{eq:specmet} essentially corresponds to choosing an estimate that matches the phaseless measurements at hand the highest on average, i.e.,
\begin{equation}\label{eq:specmet2}
\hat{\x} = \underset{\|\v \| = \sqrt{\lambda_0}}{\text{argmax}} \ \frac{1}{M}  \langle \y, \hat{\y}(\v) \rangle,
\end{equation}
where $\hat{\y}(\v)_m =  | \langle \mathbf{a}_m, \v \rangle |^2$ with $\| \v \| = \sqrt{\lambda_0} \approx \| \x \|$. 


Observe that \eqref{eq:specmet2} outlines a fundamental issue with the classical spectral initialization. 
To generate an accurate estimator, the method relies on the signal domain correlation to be implied by a Euclidean sense alignment of the corresponding phaseless measurements on $\mathbb{C}^M$. 
However, \eqref{eq:specmet2} lacks a tractable normalization step to be a proper measure of the alignment of the phaseless measurements in the range of $\mathcal{A}$, unless the restricted-isometry type property in \eqref{eq:tightframe} is satisfied very tightly over all $\z \in \mathbb{C}^N$. 
Indeed such a concentration bound is not feasible for the Gaussian sampling model at information theoretic limits of $M = \mathcal{O}(N)$ \cite{chen2017solving, yonel2020deterministic}. 
Thereby, a large average correlation value could be obtained by an estimate that is biased towards a match with high magnitude elements in $\y$ as opposed to a uniform match over all samples which is ideal of an accurate estimator under the assumption that $\mathcal{A}$ is injective with $M \geq 4N-4$. 

Consequently, \eqref{eq:specmet2} reveals a key intuition to why \emph{sample truncation} improves spectral methods: the high magnitude measurements are excluded from the correlation measure in \eqref{eq:specmet2} 
which mitigates this effect resulting from variations on the length of elements that are mapped from the unit ball in $\mathbb{C}^N$ to the range of $\mathcal{A}$.
In addition, sample truncation directly relates to the feasibility of a uniformly tight \eqref{eq:tightframe} when $M = \mathcal{O}(N)$ in the Gaussian sampling model \cite{chen2017solving}, which directly implies that such variations are bounded with high probability at more practical sample complexities for high dimensional inference problems.

\subsection{Sample Processing and Geometric Formulations}\label{sec:OrthMinnrom}
In essence, the formation of $\mathbf{Y}$ via backprojection  in the lifted domain as \eqref{eq:backproj} can be interpreted as a linear synthesis in the range of the adjoint of the lifted forward model, i.e., in the span of rank-1, PSD measurement matrices $\{ \a_m \a_m^H \}_{m = 1}^M$.
This motivates deploying more general processing functions $\mathcal{T}(\cdot)$ on the phaseless measurements beyond sample truncation in order to perform the synthesis such that the principal spectral component of 
\begin{equation}\label{eq:preproc0}
\mathbf{Y}(\mathcal{T}) = \frac{1}{{M}}\sum_{m = 1}^{{M}} t_m \mathbf{a}_m \mathbf{a}_m^H, \quad \text{where} \ \ \mathbf{t} = \mathcal{T}(\y) \in \mathbb{C}^M
\end{equation}
better approximates the underlying ground truth $\x \in \mathbb{C}^N$.   
The view in \eqref{eq:preproc0} culminates into several alternative approaches for processing the phaseless measurements, and motivates using pre-processing functions $\mathcal{T}$ that mitigate shortcomings of the classical formulation.

The particular choice of $\mathcal{T}$ is a crucial design step for spectral methods. 
The general formalism in \eqref{eq:preproc0} includes truncation approaches, which were the first pre-processing techniques pursued for improving spectral methods.
Notable choices for $\mathcal{T}$ in the literature include truncation with respect to a hyper-parameter $\kappa$-standard deviations from the sample mean \cite{chen2017solving, wang2017solving} or the sample median \cite{zhang2016provable}, decaying exponential functions of measurements \cite{gao2017phaseless, ghods2018linear}, and using reduced orders of measurements as $y_m^{k}$ with $0< k <1$ in the synthesis \cite{wang2018phase}. 
More often than not, sample processing functions are formulated in the literature through statistical arguments or heuristics to improve the accuracy of estimation by mitigating the inherent biasing effect in the classical spectral method. 
To this end aforementioned sample processing functions aim to prune the weight of sampling vectors that correspond to high magnitude measurements in the synthesis of $\Y$. 
A handful of approaches adopt geometric derivations for spectral estimates of the form \eqref{eq:preproc0}.
Notable such methods in the literature are the linear spectral estimator of \cite{ghods2018linear} which uses \emph{the minimum norm solution} of \eqref{eq:objfun2} in constructing the spectral matrix estimate $\Y_{min}$,
and orthogonality promoting methods \cite{wang2017solving, wang2017scalable, chen2018phase, duchi2019solving} which leverage the near-orthogonality property of high dimensional random vectors.

In particular, orthogonality promoting methods set $\mathcal{T}$ as the indicator function of an index set that picks the subset of sampling vectors with $m \in \mathit{I}_{S}$, where $\mathit{I}_{S}$ is the index set consisting of the $S$ smallest magnitude measurements.
As a result the sampling vectors included in synthesis are uniformly scaled, and an initial estimate is formed using the eigenvector associated with the \emph{minimum} eigenvalue. 
Inherently, these methods seek a \emph{robust fit} using the $\ell_1$ loss over the small magnitude measurements in the synthesis under the generalized gradient estimator of \cite{bhojanapalli2016dropping}
\begin{equation}\label{eq:spect}
\mathbf{Y}(\mathcal{J}) := \mathcal{P}_{+} \left( - \nabla \mathcal{J} (\X) \vert_{\X = \mathbf{0}} \right), 
\end{equation}
where $\mathcal{J}$ is a convex function over the lifted domain, and $\mathcal{P}_{+}$ denotes the orthogonal projection onto the PSD cone in $\mathbb{C}^{N \times N}$. 
When $\mathcal{J}$ is set as the $\ell_2$ loss over the lifted variable, this initialization reduces to the classical spectral method. 

In principle, the generalized gradient estimator is motivated geometrically as it is equivalent to keeping tangent of $\mathcal{J}$ at $\X = \mathbf{0}$, evaluated over the set of rank-1, PSD matrices.
Indeed the spectral matrix estimate is analogously obtained from the first Uzawa iteration that solves the following problem over the lifted domain:
\begin{equation}\label{eq:Plift_cp4}
\text{minimize} \ \mathcal{J}(\X) \quad s.t. \quad \mathbb{C}^{N \times N} \ni \X \succeq 0
\end{equation}
where the optimization is initiated from $\X = \mathbf{0}$, with $\mathcal{J}$ evaluating the mismatch in the range of $\mathcal{A}$.
Specifying $\mathcal{J}(\X) = \frac{1}{M} \| \mathcal{A}_{\mathit{I}_{S}}(\X) - \y_{\mathit{I}_S} \|_1$ then yields the spectral matrix of the orthogonality promoting method, where
\begin{equation}\label{eq:l1Spec}
\mathbf{Y}_{\ell_1}(S) = \frac{1}{M} \sum_{m \in \mathit{I}_S} \a_m \a_m^H. 
\end{equation}
Hence, the orthogonality promoting formulation is based on the intuition that \eqref{eq:spect} should form an estimate that is aligned orthogonal to the rank-1, PSD unknown in the lifted domain when the loss is constructed as a robust fit to near-orthogonal measurements of the underlying ground truth signal. 




Notably, under the assumption that we are operating near the information limits, as $M \ll N^2$, an optimization problem of the form of \eqref{eq:Plift_cp4} is severely ill-posed. 
Indeed, the original formulation of the convex optimization problem in the lifted domain utilizes a low rank constraint via the trace norm of $\X$ to promote exact solutions through iterative solvers using the original quadratic loss objective in \eqref{eq:objfun2} for $\mathcal{J}$. 
To this end, one can in turn seek \emph{the minimum norm solution} in the initialization to promote a closed form estimate, which accordingly regularizes the problem over the lifted domain as:
\begin{equation}\label{eq:minnorm}
\text{minimize:} \ \| \X \|_F^2 \quad \text{s.t.} \quad \mathcal{A}(\X) = \y,
\end{equation}
in favor of the PSD projection in \eqref{eq:Plift_cp4}. 
Solution of \eqref{eq:minnorm} is equivalently the least squares solution to phaseless equations over the lifted domain via the right pseudo-inverse: 
\begin{equation}
\Y_{min} = \mathcal{A}^H (\mathcal{A} \mathcal{A}^H)^{-1} ( \y ),
\end{equation}
which is analogously a \emph{filtered-backprojection} estimate of the lifted unknown, where $\mathcal{A} \mathcal{A}^H = |\mathbf{A} \mathbf{A}^H|^2$, with the squared absolute value denoting \emph{element-wise} operations. 
Clearly, $\Y_{min}$ yields an optimal estimator in terms of MSE over the measurement domain, as opposed to the original backprojection estimate.
However, this comes at the additional computational cost of solving an $M \times M$ system of equations for the filtering step. 

The minimum norm solution promotes an estimate by facilitating an alignment on the lifted domain in $\mathbb{C}^{N \times N}$. The spectral estimation is conducted by a search for the rank-1, PSD element that is maximally aligned to the orthogonal projection of the lifted ground truth $\x \x^H$ onto the range of $\mathcal{A}^H$ such that
\begin{align}
 \langle \z^H \Y_{min} \z \rangle  &= \langle 	\z \z^H, \mathcal{A}^H (\mathcal{A} \mathcal{A}^H)^{-1} ( \y )	\rangle_F \label{eq:minrmsol_e0} \\
&=   \langle \z \z^H,  \mathcal{A}^H (\mathcal{A} \mathcal{A}^H)^{-1} \mathcal{A}( \x \x^H )	\rangle_F \label{eq:minrmsol_e1} \\
&=   \langle \mathcal{A} (\z \z^H),  \mathcal{A}( \x \x^H )	\rangle_{(| \mathbf{A} \mathbf{A} |^{2})^{-1}} \label{eq:minrmsol_e2},
\end{align}
where $\mathcal{T}(\y) = (\mathcal{A} \mathcal{A}^H)^{-1} ( \y ) := (| \mathbf{A} \mathbf{A} |^{2})^{-1}  \y$ and $\y := | \mathbf{A} \x |^2$ with $\mathbf{A} \in \mathbb{C}^{M \times N}$. 
Equivalently through \eqref{eq:minrmsol_e2}, the estimator 
seeks an alignment between phaseless measurements in the range of $\A$ that are \emph{whitened}, i.e., includes a correction for the correlations between the sampling vectors. 

Despite solid geometric foundations, there exists clear limitations for these approaches. 
Although an improved performance should be expected using the minimum norm solution over the classical formulation, this is attained at the cost of inverting an $M \times M$ system of equations. 
Furthermore, as the estimator is limited to maximizing alignment to the projection of $\x \x^H$ onto the range of $\mathcal{A}^H$, its performance is highly contingent on the descriptive capacity of the collection $\{ \a_m \a_m^H \}_{m = 1}^M$ as a dictionary for rank-1, PSD matrices in $\mathbb{C}^{N \times N}$. 
Indeed, the numerical demonstrations of \cite{ghods2018linear} indicate the method is most effective in low dimensional problems with high over-sampling ratios. 
On the other hand while near-orthogonality assumptions are satisfied with high probability for the Gaussian sampling model, its a notable limitation for orthogonality promoting methods in problems with structured measurements. 
Particularly, the analysis in \cite{wang2017solving} requires the set of near-orthogonal sampling vectors to have cardinality of at least $3/4^{th}$ of total number of samples $M$, whereas empirically recommends $| \mathit{I}_{S}^c | = M/6$ when $M/N$ is near information theoretic limits.


\subsection{Optimal Sample Processing in the Gaussian Model}
More recently, for the Gaussian sampling model, optimal pre-processing functions have been studied in few seminal works \cite{lu2017phase, mondelli2017fundamental, luo2019optimal}. 
In \cite{lu2017phase}, authors identify the phase transitions of spectral methods in that the method fails with probability $1$ below a critical factor of samples with $M = \alpha N$ if $\alpha \leq \alpha^*$, in the limit $N \rightarrow \infty$. 
In \cite{mondelli2017fundamental}, again fixing $\alpha = M/N$ such that $N \rightarrow \infty$, authors define the \emph{weak recovery} guarantee in the noise-free setting as yielding a spectral estimate such that $\epsilon > 0$ is attained with probability $1$. 

Assuming that the underlying unknown of interest is unit norm, i.e. $\| \x \| = 1$, and the measurements are realized from Gaussian sampling vectors with unit variance, the following $\mathcal{T}(\cdot)$ is derived as optimal in the sense that weak recovery is guaranteed with smallest possible oversampling factor $\alpha \geq \alpha^*$:
\begin{equation}\label{eq:preproc}
\mathcal{T}(y_m) = \frac{y_m -1}{y_m + \sqrt{\alpha} - 1}.
\end{equation}
as $\sigma \rightarrow 0$ with $\sigma^2$ denoting the variance of additive white Gaussian noise on the phaseless measurements. 
Additionally the impact of additive noise is captured in the lower bound $\delta^* = 1 + \sigma^2 + {o}(\sigma^2)$ as $\sigma \approx 0$. 
On the other hand in \cite{luo2019optimal}, the optimality of $\mathcal{T}$ is considered over all $\alpha$ by considering the maximization over the correlation coefficient $\rho$,    
yielding the limit $\delta \rightarrow_{+} 1$ of \eqref{eq:preproc} as optimal for the noise-free Gaussian sampling model with
\begin{equation}\label{eq:optpre}
\mathcal{T}^*(y_m) := \underset{\mathcal{T} \in \mathit{F}_{\mathcal{T}}}{\text{sup} }  \ \rho( \x, \alpha) \rightarrow  1 - \frac{1}{y_m},
\end{equation} 
where $\mathit{F}_{\mathcal{T}}$ denotes the set of feasible pre-processing functions that the authors define as having bounded range with an upper bound that is positive valued. 
Note that the optimal function in the noise-free Gaussian setting approaches to the right-hand-side of \eqref{eq:optpre}, which does not have a bounded range. 
{In addition, the sampling factor $\alpha$ now appears in the correlation coefficient in \eqref{eq:optpre} unlike the weak recovery framework of \cite{mondelli2017fundamental}, which considers the existence of an absolute $\epsilon > 0$ with an $\alpha$-adaptive processing function.}
In fact, the sample processing function in \eqref{eq:preproc} is merely a \emph{smoothed} version of the optimal function in \eqref{eq:optpre} as:
\begin{equation}
\mathcal{T}(y_m) = 1 - \frac{1}{(\sqrt{{1}/{\alpha}}) y_m  + (1 - \sqrt{{1}/{\alpha}})},
\end{equation}
where the denominator is effectively a \emph{convex combination} of each measurement $y_m$ with its \emph{expected value} in the Gaussian sampling model under the unit norm assumption on $\x$, as $\mathbb{E}[y_m] = \frac{1}{2} \| \x \|^2 \mathbb{E} [\chi_2^2] = 1$. 
On the other hand the framework in \cite{luo2019optimal} promotes pre-processing functions that depend on the noise level in the measurements, in which optimal processing functions were derived under the Gaussian sampling model with Poisson and additive white Gaussian noise. 


Despite their accompanying performance guarantees, the aforementioned works are inherently limited by the model assumptions as they were shown as optimal only for measurements that were collected from i.i.d. Gaussian sampling vectors. 
Further restrictions on the optimality arguments include element-wise structural form of the pre-processing functions, and the search space constraints as depicted in \eqref{eq:optpre}. 
Nonetheless, there is a crucial insight to gain from the sample processing approaches to spectral initialization that is especially striking due to their reciprocal functional forms with respect to $\y_m$.
Following from our deconstruction that the spectral estimation is a search of alignment over phaseless measurements, the optimally designed sample processing functions indicate that there exists a mechanism beyond that of pruning the impact of sampling vectors at synthesis to mitigate the biasing effect under the statistical formalism of spectral methods.
In fact, the geometric intuition that is inherent in the classical spectral method and its heuristic variants is effectively eradicated by the sample processing functions that are optimally derived under the Gaussian sampling model, as there is no particular relation to an Euclidean-sense alignment under such non-linear transformations of the measurements. 
Therefore, it begs to question why such a reciprocal relationship to the value of $y_m$ could be optimal. 
To this end, in our unique perspective we interpret the role of the sample processing function $\mathcal{T}$ as \emph{warping} the $\ell_2$-sense alignment measure to metrics that are more suitable for assessing similarity between phaseless measurements. 


\section{Bregman Divergence Minimizing Spectral Methods}\label{sec:4_sec3}


In this section, we formulate the design of spectral methods through principles of approximate Bregman distance minimization over phaseless measurements in the range of the lifted forward model. 


\subsection{Preliminaries on Bregman Loss Functions}\label{sec:Sec4_1}
We begin by presenting preliminary definitions and properties which we feature in deriving optimal sample processing functions via our framework. 
\begin{definition}{(\emph{Bregman divergence} \cite{banerjee2005clustering})} \label{def:Bregman}
Let $\phi: \mathit{S} \rightarrow \mathbb{R}$, be a strictly convex function defined on a convex set $\mathit{S} = \mathrm{dom}(\phi) \subset \mathbb{R}^M$ such that $\phi$ is differentiable on the relative interior $\mathrm{ri}(\mathit{S})$, assumed to be nonempty. The {Bregman divergence} $d_{\phi}: \mathit{S} \times \mathrm{ri}(\mathit{S}) \rightarrow [0, \infty)$ is defined as:
\begin{equation}\label{eq:Bregman}
d_{\phi} (\q , \p ) = \phi(\q) - \phi(\p) - \langle \q - \p, \nabla \phi(\p) \rangle,
\end{equation}
where $\nabla \phi(\p)$ is the gradient of $\phi$ evaluated at $\p$. 
\end{definition}
In other words, Bregman divergence is a measure of distance between points $\q, \p \in \mathit{S}$ defined in terms of a strictly convex function of choice in $\phi$. 
Geometrically, \eqref{eq:Bregman} characterizes the gap between a strictly convex function and its tangent at a given point $\p \in \mathrm{ri}(S)$.
As such, squared Euclidean distance is a particular instance of a Bregman divergence using the squared $\ell_2$ norm on $\mathbb{R}^M$ as:
\begin{equation}
d_{\ell_2} (\q , \p ) = \|\q \|^2_2 -  \|\p \|^2_2- \langle \q - \p,  2\p \rangle = \| \q - \p \|_2^2.
\end{equation}
Fundamentally, our motivation in pursuing general Bregman distances over phaseless measurements is to transcend the $\ell_2$-sense similarity that is relied upon in applications of spectral estimators for phase retrieval.
We stressed the major shortcoming of this inherent mechanism, where maximizing the correlation of phaseless measurements is prone to biasing a fit over a subset of measurements that correspond to high magnitudes due to the averaging effect. 

Ultimately, under the injectivity assumption on the lifted forward operator with $M \geq 4N - 4$, an accurate estimator ideally exhibits a uniform fit over measurements. 
In information theory, such a fit is pursued 
via a Bregman distance generated over the more exclusive domain in $\mathbb{R}_{+}^{M}$ using the negative entropy function as \cite{banerjee2005clustering}:
\begin{align}\label{eq:KL-div}
d_{\phi}(\q , \p )  &= \sum_{m = 1}^M q_m \log q_m - \sum_{m = 1}^M p_m \log p_m \\
&-  \sum_{m = 1}^M ( q_m - p_m ) (\log p_m + 1) \nonumber \\
&= \sum_{m = 1}^M q_m \log \frac{q_m}{p_m} - \sum_{m = 1}^M ( q_m - p_m ),
\end{align}
which on the $M-\mathrm{Simplex}$ reduces to $D_{KL}(\q, \p)$, i.e., the KL-divergence of two discrete probability distributions having $\sum_{m} q_m = \sum_{m} p_m  = 1$. 
Beyond the $M-\mathrm{Simplex}$, the general form of \eqref{eq:KL-div} corresponds to the generalized $I$-divergence over $\mathbb{R}_{+}^M$.  

An equivalent interpretation of the generalized $I$-divergence in \eqref{eq:KL-div} is as weighted sum of the form:
\begin{equation}\label{eq:KL-div2}
D_I(\q , \p ) = (\sum_{j = 1}^M q_j) \sum_{m = 1}^M \frac{q_m}{\sum_{j = 1}^M q_j} \left(\frac{p_m}{q_m} - 1 - \log \frac{p_m}{q_m} \right),
\end{equation}
where a particular measure of mismatch over $\p$ and $\q$ in $\mathbb{R}_{+}^M$ is minimized in expectation over a proper $M-point$ distribution $\tilde{\q} = \q/\|\q\|_1$. 
Clearly, without constraining the domain of $\q$ to the $M-\mathrm{Simplex}$, \eqref{eq:KL-div2} has a multiplicative bias on the scale of $\q$.
Instead we can evaluate the mismatch by an expectation over a distribution $\tilde{\q}$ of choice and remove the multiplicative bias such that
\begin{equation}
D_{\tilde{\q}}(\q , \p ) := \sum_{m = 1}^M \tilde{q}_m \left(\frac{p_m}{q_m} - 1 - \log \frac{p_m}{q_m} \right).
\end{equation}
We can interpret the distribution $\tilde{\q}$ as invoking prior information for the fit that is being pursued, where the Bregman distance is generated by a weighted function $\phi_{\tilde{q}}(\p) = -\sum_m \tilde{q}_m \log p_m$. 
Assuming a uniform prior as $\tilde{q}_m = 1/M$ yields the Itakura-Saito divergence from $\p$ to $\q \in \mathbb{R}_{++}^M$ as
\begin{equation}\label{eq:Itakura}
{d}_{\phi}(\p, \q) = \frac{1}{M}\sum_{m = 1}^M \left( \frac{p_m}{q_m} - 1 - \log \frac{p_m}{q_m} \right),
\end{equation}
where the generating function of the Bregman distance is the negative entropy rate. 

The uniform expectation in the Itakuro-Saito divergence is common practice for the signal processing setting, where indexes correspond to frequencies which are equally weighted is applications such as spectral density estimation where
\begin{align}\label{eq:ItSait}
D_{IS}(F, G) = 
\frac{1}{2\pi}\int_{-\pi}^{\pi} \left( \frac{F(\omega)}{G(\omega)} - 1 - \log  \frac{F(\omega)}{G(\omega)} \right) d\omega,
\end{align}  
where $F$ and $G$ are power spectra 
over the normalized frequency variable $\omega \in [-\pi, \pi]$. 
Notably, Itakura-Saito distance is equivalently the generalized $I$-divergence between the power spectral density of equal mean, stationary Gaussian processes \cite{banerjee2005clustering}.
In particular, for two zero mean Gaussian processes with spectral densities $F$ and $G$, consider the corresponding $M$-point probability densities $p_F^M(\X)$, $p_G^M(\X)$ where $\X = [ X_1, X_1, \cdots X_{M}]$.
The KL-divergence of the $M$-point probability distributions is defined as the $M^{th}$ order information criterion where \cite{kailath1967divergence, gray1980distortion}
\begin{align}\label{eq:infcri}
&I_M (F, G) := D_{KL}(p_F^M, p_G^M) \\
&= \int \cdots \int dX_1 \cdots dX_M \ p_F^M(\mathbf{X}) \log \frac{ p_F^M(\X) }{p_G^M (\X) }.
\end{align}
As noted in \cite{gray1980distortion}, it is known from Pinsker \cite{pinsker1964information} that the discrimination information satisfies:
\begin{equation}\label{eq:limcri}
I(F, G) := \lim_{M \rightarrow \infty} \frac{1}{M} I_M(F, G) = \frac{1}{2} D_{IS} (F, G).
\end{equation}
Hence, the Itakura-Saito divergence between the power spectral densities of two zero mean Gaussian processes is asymptotically equivalent to the KL-divergence of the distributions of the ensembles.
This naturally motivates the use of Itakura-Saito divergence as a loss function in problems with large $M$, where the asymptotic characterization in \eqref{eq:limcri} is sufficiently approximated in the finite sample regime. 
Most prominently, \eqref{eq:ItSait} and related techniques were utilized for speech processing and recognition problems \cite{itakura1975minimum, gray1981rate}.





\subsection{Constructing Spectral Methods}
The key element in attaining spectral estimators for phase retrieval that minimize a choice of Bregman distance is the formulation of a quadratic form to maximize. 
Fundamentally, morphing the form of $d_{\phi}$ such that phaseless measurements are synthesized in the range of $\mathcal{A}$ allows for a tractable search over the domain of rank-1, PSD matrices.
However one must note that there exits an inevitable trade-off between exactness and tractability in minimizing Bregman distances by a spectral search due to the inherent non-linearity over the search variable $\q$.

To this end, a crucial property we leverage is the optimality of conditional mean as an estimator in minimizing the expectation of a Bregman divergence \cite{banerjee2005optimality}, which in effect is a generalization of minimum mean square error estimation. 
\begin{definition}{(\emph{Bregman Representative} \cite{banerjee2005optimality})}\label{def:BregRep}
Let $Q$ be a random variable that take values in $\mathit{X} \subset \mathit{S} \subset \mathbb{R}^M$ following a positive probability measure $v$ such that $\hat{\q} = \mathbb{E}_v [Q] \in \mathrm{ri}(S)$. Then for a Bregman divergence $d_{\phi}: \mathit{S} \times \mathrm{ri}(\mathit{S}) \rightarrow [0, \infty)$, $\hat{\q}$ is the unique minimizer of $\mathbb{E}_v [d_{\phi} (Q, \s)]$ over $\s \in \mathrm{ri}(\mathit{S})$, i.e.
\begin{equation}\label{eq:BregRep}
\hat{\q}:= \mathbb{E}_v [Q] = \underset{\s \in \mathrm{ri}(\mathit{S})}{\text{argmin}} \ \mathbb{E}_v [d_{\phi} (Q, \s)],
\end{equation}
where the minimum value is the \emph{Jensen gap} of $\phi: S \rightarrow \mathbb{R}$ defined in Definition \ref{def:Bregman}, such that
\begin{equation}\label{eq:JensenGap}
\mathbb{E}_v [d_{\phi} (Q, \hat{\q})] = \mathbb{E}_v [\phi(Q)] - \phi(\hat{\q}).
\end{equation}
\end{definition}
For the purpose of phase retrieval $\hat{\q}$ provides a surrogate for $\q$ in formulating a quadratic form that approximates minimization of \emph{any} Bregman distance from $\q$ to $\p$. 
Accordingly our framework for Bregman divergence minimization via a spectral search admits the following generic formalism:
\begin{align}
(P1)&: \underset{\q \in X \subset \mathit{S}}{\text{minimize}} \ d_{\phi}(\q, \p) \approx \ - \langle \q, h_{\phi}(\p, \hat{\q}) \rangle \label{eq:dmineq1}\\
(P2)  &: \underset{\| \v \| = 1}{\text{maximize}} \ \v^H \left[\sum_{m = 1}^M h_{\phi}(\p, \hat{\q})_m \a_m \a_m^H \right] \v, \label{eq:dmineq2}
\end{align}
where $X$ denotes the image of the set of rank-1, PSD matrices in the range of $\mathcal{A}$, with $h_{\phi}(\p, \hat{\q})$ serving the role of a sample processing function featuring the phaseless measurements \eqref{eq:phaless} within the definition of $\p$.

Clearly, for such a formalism to be useful, obtaining the quadratic form in \eqref{eq:dmineq2} pertains conducting approximations in a manner that the maximization in $P2$ sufficiently reflects the minimization in $P1$. 
Therefore, our objective is to design a sample processing function $h_{\phi}$ such that the approximation in \eqref{eq:dmineq1} has \emph{minimal distortion} from the original Bregman divergence, while controlling the use of the Bregman representative to a minimum. 
Intuitively, such a principled design of $h_{\phi}$ corresponds to maximally \emph{encoding} the mismatch information that is minimized in $P1$ into the formulation of $P2$ which is tractable as a spectral method. 

For conceptual brevity, we denote this as \emph{the principle of minimal Bregman representation}, which we mathematically characterize as follows.
\begin{definition}{\emph{Minimal Bregman Representation}.}\label{def:MinRep}
Given Definition \ref{def:BregRep}, let the Bregman divergence of choice in $d_{\phi}$ be approximated as \eqref{eq:dmineq1} via the Bregman representative of the random variable $Q$ over $\mathrm{ri}(S)$. Then, the formulation of $P2$ in \eqref{eq:dmineq2} introduces minimal distortion to the original problem in $P1$ if:  
\begin{equation}\label{eq:minrep}
\big| \mathbb{E}_v[-\langle Q, h_{\phi} (\p, \hat{\q}) \rangle + c_{\phi}(\p)] - \mathbb{E}_v [ d_{\phi} (Q, \p) ] \big| = \mathbb{E}_v [d_{\phi}(Q, \hat{\q})]
\end{equation}
where $c_{\phi}$ is a deterministic function with dependence on only the fixed quantity $\p \in  \mathrm{ri}(S)$. 
Furthermore, if an $h_{\phi}$ that satisfies \eqref{eq:minrep} is obtained under the least feasible order of approximations on $\q$, then $P2$ admits a minimal Bregman representation. 
\end{definition}
Subsequently, a sample processing function is referred to as \emph{optimal} under the following definition.
\begin{definition}{\emph{Optimal Sample Processing}.}\label{def:OptSP}
Given the Bregman divergence $\d_{\phi}: S \times \mathrm{ri}(S) \rightarrow \mathbb{R}$, let a spectral method in the form of $P2$ be constructed.
Then, if $h_{\phi}$ is obtained under a minimal Bregman representation, it is an optimal sample processing function for minimizing $\d_{\phi}$. 
\end{definition}

Essentially, if $h_{\phi}$ satisfies \eqref{eq:minrep} in approximating the original divergence, then $P2$ has minimal distortion from $P1$ as desired via the definition of the Bregman representative $\hat{\q}$. 
Consequently, once equipped with the Bregman representative of the image of any rank-1, PSD element in the range of the lifted forward model $\mathcal{A}$, we have the means to formulate the minimization of our Bregman divergence of choice into \eqref{eq:dmineq1} under the guidance of \eqref{eq:minrep}. 

As an immediate demonstration of our construction of $P2$, we address the formulation of the quadratic loss minimizing spectral method setting $\phi(\cdot) = \| \cdot \|^2$.
For the phase retrieval problem, the Bregman divergence between $\q$ and $\p$ then represents the $\ell_2$-mismatch between the rank-1, PSD elements in the range of the lifted forward model as depicted in \eqref{eq:objfun2}.
Having $\q = \mathcal{A}(\v \v^H)$ with $\| \v \| = 1$, the classical spectral method can be interpreted within our framework by setting $\p = \y$ and approximation in \eqref{eq:dmineq1} is conducted as:
\begin{equation}\label{eq:classicapprx}
d_{\ell_2}(\q, \p)\vert_{\p = \y} \approx (\| \hat{\q} \|^2 - \| \q \|^2) + \| \q \|^2 + \| \y \|^2 - 2\langle \q, \y \rangle.
\end{equation}
which fully eliminates the $\| \q \|^2$ dependence from $\d_{\phi}$ using the Bregman representative that minimizes $\mathbb{E}_v[\| \hat{\q} \|^2 - \| Q \|^2]$, and yields the maximization in $P2$ with $h_{\phi}(\p, \hat{\q})_m = y_m$. 

Notably, the classical spectral method does not require a normalization since the spectral search obtained for \eqref{eq:dmineq2} is invariant to the scaling on $\y$. 
However, despite introducing the minimal distortion term from the right-hand-side of \eqref{eq:minrep} to \eqref{eq:classicapprx} on expectation, the classical spectral method does not abide by the principle of minimal Bregman representation. 
Observe that the following approximation captures the dependence of $d_{\ell_2}(\q, \p)$ on $\q$ to full extent while attaining the form in \eqref{eq:minrep}:
\begin{align}
d_{\ell_2}(\q, \p) &= \langle \q, \q \rangle + \| \p \|^2 - 2\langle \q, \p \rangle =  \| \p \|^2 + \langle \q , \q - 2\p \rangle \\
& \approx  \| \p \|^2 + \langle \q , \hat{\q} - 2\p \rangle = c_{\phi}(\p) -\langle \q, h_{\phi} (\p, \hat{\q}) \rangle, \label{eq:specl2} 
\end{align} 
which on expectation satisfies:
\begin{equation}\label{eq:l2distor}
\mathbb{E}_v[\langle Q , \hat{\q} - 2\p \rangle]] = \mathbb{E}_v[\langle Q , Q - 2\p \rangle] + \mathbb{E}_v[\langle Q , \hat{\q} - Q\rangle]
\end{equation}
where $\mathbb{E}_v[\langle Q , \hat{\q} - Q\rangle] = \mathbb{E}_v [\phi(Q)] - \phi(\hat{\q})$ is by definition the minimal distortion identified as the Jensen gap in \eqref{eq:JensenGap}.
As a result, having discarded only $c(\p) = \| \p \|^2$, the spectral estimator in $P2$ admits a sample processing function as $h_{\phi}(\p, \hat{\q})_m = 2 p_m - \hat{q}_m$ when formulated under the principle of minimal Bregman representation.  

Here, we note that in addition to the violation of \eqref{eq:minrep}, \eqref{eq:specl2} also reveals that the Bregman representation is featured at a non-minimal order within the classical formulation. 
This is indeed a special case we observe with the quadratic loss, where the minimal distortion formulation of \eqref{eq:minrep} is sufficient for a minimal Bregman representation. 
Such a property is shared by the \emph{Mahalobinis} distance, where squared loss is defined as a quadratic form over a positive semi-definite matrix $\mathbf{T}$ as:
\begin{equation}
d_{\phi}(\q, \p) := \langle \q - \p, \q - \p \rangle_{\mathbf{T}} = (\q - \p)^H \mathbf{T} (\q - \p).
\end{equation}
Considering real valued $\q, \p$ for the purpose of phase retrieval, and assuming a real-valued $\mathbf{T}$ we have $\q^T \mathbf{T} \p = \p^T \mathbf{T} \q$ which yields
\begin{align}
d_{\phi}(\q, \p) &= \q^T \mathbf{T} \q - 2 \q^T \mathbf{T} \p + \p^T \mathbf{T} \p \\
&= - \langle \q, 2 \mathbf{T} \p - \mathbf{T} \q \rangle + c_{\phi}(\p) \\
&\approx - \langle \q, \mathbf{T} (2\p - \hat{\q}) \rangle + c_{\phi}(\p),
\end{align}
where we obtain a minimal Bregman representation akin to that in \eqref{eq:l2distor}, where the inner products are now defined under $\mathbf{T}$. 
Setting $\mathbf{T} = (| \mathbf{A} \mathbf{A}^H |^2)^{-1} := (\mathcal{A}\mathcal{A}^H)^{-1}$ then yields an analogous form to the minimum norm solution in \eqref{eq:minnorm}, where $\q = \mathcal{A}(\z \z^H)$, and $\p = \y = \mathcal{A}( \x \x^H)$.


\subsection{Bregman Representatives for Phase Retrieval}
A ramification of the formulation in \eqref{eq:specl2} is that 
the sample processing function derived with $h_{\phi}(\p, \hat{\q})$ no longer yields a spectral method that is scale invariant over $\p$ in $P2$.
Consequently, $P2$ requires a normalization step on the measurements $\y$ in the assignment of $\p$, such that it matches the scaling of the Bregman representative $\hat{\q}$ as
\begin{equation}\label{eq:l2min}
\hat{\x}_{\ell_2} := \underset{\| \v \| = 1}{\text{argmax}} \ \v^H \left[\sum_{m = 1}^M (\frac{2}{\lambda_0} y_m - \hat{q}_m )\a_m \a_m^H \right] \v,
\end{equation}
where $\lambda_0$ denotes an appropriate normalization factor on $\y$ that is available \emph{a priori} to spectral estimation.
Accordingly, $\lambda_0$ depends on the particular parameterization of the random variable $Q$ that generates $\q$ through the choice of the Bregman representative $\hat{\q}$.  

It may be recognized that we abuse the notation with $\lambda_0$ which earlier in Section \ref{sec:4_sec1} is used in denoting an estimate of the energy of the ground truth signal, i.e. $\| \x \|^2$.   
Considering that $\q$ is intrinsically generated by a spectral search over the unit sphere on $\mathbb{C}^N$ with $\| \v \| = 1$, an ideal normalization of $\y$ is indeed via the energy of $\x$ such that $\p$ corresponds to a phaseless measurement generated by an element in the same domain. 
This normalization is a theme inherently common to the optimal sample processing functions in the literature \cite{mondelli2017fundamental, luo2019optimal}, where the methods are derived under the assumption of a unit-norm unknown $\x$ for the Gaussian sampling model. 

In particular, an estimator for $\| \x \|^2$ is feasible in the case of high dimensional inference under the Gaussian sampling model, due to the near-isometry of the sampling vectors, which promotes the sample mean of $\y$ as a tight estimator for $\| \x \|^2$ with high probability \cite{candes2015phase}, i.e., $\lambda_0 = \| \y \|_1 / M$. 
Correspondingly, having $\a_m$ as i.i.d. random vectors following the complex Gaussian model, i.e. $\a_m \sim \mathcal{CN}(0, 1) := \mathcal{N}(0, \mathbf{I}/2) + \mathrm{j} \ \mathcal{N}(0, \mathbf{I}/2)$, the random vector $Q$ is defined with i.i.d. chi-square distributed entries with $2$ degrees of freedom, with a scaling of $\| \v \|^2/2$.
Then, given any fixed $\v$ on the unit sphere $\mathit{S}^{N-1}(1)$ in $\mathbb{C}^N$, the Bregman representative is evaluated with an expectation over the statistical model distribution as
\begin{equation}
\hat{\q}_m^{(1)} = \mathbb{E}_{\a_m \sim \mathcal{CN}(0, 1)} [ | \langle \a_m, \v \rangle|^2 ] = 1,
\end{equation}
which completes the spectral method in \eqref{eq:l2min}. 

In practical applications of phase retrieval such as those in imaging, measurements are structured either through physical modeling or sensing prior to inference.
To this end, few alternatives can be considered as the Bregman representative of $Q$ in the context of phase retrieval. 
A surrogate that is much more suitable is one that is evaluated with an expectation on $\v$ over the unit sphere given the sampling vectors $\a_m$.
Considering that the $\v$ generates the $m^{th}$ sample of the random variable $Q$ given the specific instance of $\a_m$, we assume a uniform distribution over the unit sphere in $\mathbb{C}^N$ as $\v \sim \mathrm{Unif}(\mathit{S}^{N-1}(1))$ which yields the following Bregman representative: 
\begin{equation}\label{eq:rep2}
\hat{\q}_m^{(2)} = \mathbb{E}_{\v \sim \mathrm{U}(\mathit{S}^{N-1}(1))} [ | \langle \a_m, \v \rangle|^2] = \frac{\| \a_m \|^2}{N}.
\end{equation}
Observe for the Gaussian sampling model that $\hat{\q}^{(2)}$ statistically tends to $\hat{\q}^{(1)}$ in high dimensions via strong law of large numbers as $\mathbb{E}_{\a_m \sim \mathcal{CN}(0,1)}[\hat{\q}_m^{(2)}] = \hat{\q}_m^{(1)}$. 
Thereby $\hat{\q}_m^{(2)}$ serves as a correction of $\hat{\q}_m^{(1)}$ for the finite sample setting.
Rather more critically, $\hat{\q}^{(2)}$ is a model independent Bregman representation, as the expectation is over the unit sphere given a collection of measurement vectors $\{\a_m \}_{m = 1}^M$. 

Yet, for the purpose of formulating $P2$ with structured measurements, $\hat{\q}^{(2)}$ proves to be insufficient on its own due to the normalization step required on the phaseless measurements. 
Namely, estimation of an appropriate normalization for $\y$ using the Bregman representative hinges on having a near-isometry property on the sampling vectors, as its operated under the assumption that the estimated scale accurately generalizes over the whole unit sphere. 
However, this is well known to be quite a stringent requirement for structured measurements that are governed by deterministic models, and therefore is the key motivation of minimizing Bregman divergences defined over the $M-\mathrm{Simplex}$. 
With the specification of $M-\mathrm{Simplex}$ for the divergence domain, the random vector $Q$ is by definition normalized to sum to $1$ in order to in $S$, for which we obtain the Bregman representative as
\begin{align}\label{eq:rep3}
\hat{\q}_m^{(3)} &= \mathbb{E}_{\v \sim \mathrm{U}(\mathit{S}^{N-1}(1))} \left[ \frac{ | \langle \a_m, \v \rangle|^2}{\sum_{m =1}^M | \langle \a_m, \v \rangle|^2} \right] = \frac{\| \a_m \|^2}{\sum_{m =1}^M \| \a_m \|^2},
\end{align}
where we use the symmetry property over each dimension within the random variable in $Q$ indexed by $m$.


\section{Optimal Design of Sample Processing Functions}\label{sec:4_sec4}

In this section, we use our formalism outlined in Section \ref{sec:4_sec5}, and derive sample processing functions that facilitate the minimization of Kullback-Leibler, and Itakura-Saito divergences on the $M-\mathrm{Simplex}$ under the principle of minimal Bregman representation. 

Fundamentally, we stated the objective of our framework as constructing spectral methods that are provably good under universal arguments, and thereby improving their applicability in problems with structured measurements. 
For many practical applications of phase retrieval, the underlying linear process is either sensed, or modeled as a deterministic mapping that captures physical phenomenon such as scattering, and do not attain favorable properties associated with the Gaussian sampling model. 
One such property we stressed relates to the proper normalization of phaseless measurements as highlighted in conclusion of Section \ref{sec:4_sec3}.

To this end, minimizing Bregman divergences on the $M-\mathrm{Simplex}$ presents a feasible option for practical purposes as the mismatch between two elements $\q$, $\p$ is evaluated under absolute normalization, rather than relatively to a quantity that is unknown at the receive end of a physical sensing system.

\subsection{Kullback Leibler Divergence Minimization}

Restricting the \eqref{eq:KL-div} on the $M-\mathrm{Simplex}$, the KL-divergence of $\q$ from $\p$ is defined as
\begin{equation}\label{eq:KLspec}
D_{KL}(\q, \p) = \sum_{m = 1}^M q_m \log \frac{q_m}{p_m}.
\end{equation}
For spectral estimation, the emphasis of our framework is on tractability in the form of $P2$ under the principle of minimal Bregman representation on the original divergence minimization problem in $P1$.  
The KL-divergence minimizing spectral method is thereby formulated setting $\p = \y / \| \y \|_1$, and using the Bregman representative $\hat{\q}^{(3)}$ within the logarithmic term. 
\begin{align}
D_{KL}(\q, \p) \approx - \sum_{m = 1}& q_m \log \left( \frac{y_m / \| \y \|_1}{\| \a_m \|^2 / \sum_{i = 1}^M \| \a_i \|^2} \right) \label{eq:KLaprx}\\
\hat{\x}_{KL} := \underset{\| \v \| = 1}{\text{argmax}} \ \v^H &\left[\sum_{m = 1}^M \log \left( \frac{y_m \sum_{i = 1}^M \| \a_i \|^2}{\| \a_m \|^2 \| \y \|_1}  \right) \a_m \a_m^H \right] \v.\label{eq:KLfinal}
\end{align}
\begin{proposition}
Let $P2$ in \eqref{eq:dmineq2} be constructed as the maximization in \eqref{eq:KLfinal} with $\p = \y/\| \y \|_1$ such that $h_{\phi}(\p, \hat{\q})$ denotes the sample processing function as:
\begin{equation}
[h_{\phi}(\p, \hat{\q})]_m := [\mathcal{T}(\y)]_m = \log \left( \frac{y_m \sum_{i = 1}^M \| \a_i \|^2}{\| \a_m \|^2 \| \y \|_1}  \right).
\end{equation}
Then, \eqref{eq:KLfinal} is a spectral method under minimal Bregman representation such that \eqref{eq:KLaprx} is on expectation the best approximation for the original KL-divergence in \eqref{eq:KLspec} with
\begin{equation}\label{eq:KL_justif}
\big| \mathbb{E}_{\v} [- \langle Q,  h_{\phi}(\p, \hat{\q}) \rangle] - \mathbb{E}_{\v} [D_{KL} (Q, \p)] \big| = \mathbb{E}_{\v}[D_{KL}(Q, \hat{\q})].
\end{equation}
\end{proposition}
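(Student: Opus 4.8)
The plan is to reduce the identity \eqref{eq:KL_justif} to a one-line pointwise cancellation in the random variable $Q$, and then invoke the Bregman representative property of Definition \ref{def:BregRep}. First I would record the assignments made in the statement: $\p = \y/\|\y\|_1$ lies on the $M$-Simplex, the sample processing function reads $[h_{\phi}(\p,\hat{\q})]_m = \log(p_m/\hat{q}_m^{(3)})$, and $\hat{\q}^{(3)}$ from \eqref{eq:rep3} is precisely the Bregman representative $\mathbb{E}_{\v}[Q]$ of the normalized measurement vector $Q$ generated by $\v \sim \mathrm{Unif}(S^{N-1}(1))$. Since $\hat{\q}^{(3)}$ is an expectation of points on the $M$-Simplex, it itself lies in $\mathrm{ri}(S)$, so that $D_{KL}(Q,\hat{\q}^{(3)})$ is a genuine KL divergence and is non-negative.

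The core of the argument is the identity, valid for every realization of $Q$ on the simplex,
\begin{equation}
-\langle Q, h_{\phi}(\p, \hat{\q}) \rangle - D_{KL}(Q, \p) = -D_{KL}(Q, \hat{\q}).
\end{equation}
I would obtain this by expanding $-\langle Q, h_{\phi}\rangle = -\sum_m Q_m \log p_m + \sum_m Q_m \log \hat{q}_m$ and $D_{KL}(Q,\p) = \sum_m Q_m \log Q_m - \sum_m Q_m \log p_m$; the cross-entropy terms $-\sum_m Q_m \log p_m$ cancel and the remainder is exactly $-\sum_m Q_m \log(Q_m/\hat{q}_m) = -D_{KL}(Q,\hat{\q})$. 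The conceptual point worth stressing is that, unlike the quadratic case in \eqref{eq:specl2}, no deterministic term $c_{\phi}(\p)$ survives: the entire cross-entropy between $Q$ and $\p$ is linear in $Q$ and is absorbed verbatim into $-\langle Q, h_{\phi}\rangle$, so $c_{\phi}(\p)=0$ in the notation of \eqref{eq:minrep}.

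To finish, I would take $\mathbb{E}_{\v}$ of the identity and pass to absolute values. Linearity gives $\mathbb{E}_{\v}[-\langle Q, h_{\phi}\rangle] - \mathbb{E}_{\v}[D_{KL}(Q,\p)] = -\mathbb{E}_{\v}[D_{KL}(Q,\hat{\q})]$, and since the right-hand side is non-positive its modulus equals $\mathbb{E}_{\v}[D_{KL}(Q,\hat{\q})]$, which is \eqref{eq:KL_justif}. By Definition \ref{def:BregRep} this common value is the Jensen gap $\mathbb{E}_{\v}[\phi(Q)]-\phi(\hat{\q})$ of the negative-entropy generator on the simplex, i.e. the minimal attainable distortion, so \eqref{eq:minrep} holds. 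I would close by observing that $\hat{\q}$ enters $h_{\phi}$ only once and only inside the logarithm, with $Q$ appearing linearly, so the representation is of least feasible order; Definitions \ref{def:MinRep} and \ref{def:OptSP} then certify that $\mathcal{T}$ is optimal.

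I do not anticipate a serious obstacle, as the computation is a single cancellation. The only points demanding care are qualitative rather than computational: (i) confirming that $\hat{\q}^{(3)} \in \mathrm{ri}(S)$ so that $D_{KL}(Q,\hat{\q})\ge 0$ and the absolute value resolves with the correct sign, and (ii) justifying the \emph{least feasible order} clause of Definition \ref{def:MinRep}, which is where the argument is conceptual rather than algebraic.
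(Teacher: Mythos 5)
Your proof is correct and follows essentially the same route as the paper's: both establish the pointwise identity $-\langle Q, h_{\phi}(\p,\hat{\q})\rangle = D_{KL}(Q,\p) - D_{KL}(Q,\hat{\q})$ (the paper by adding and subtracting $\log(q_m/p_m)$ inside the sum, you by cancelling the cross-entropy terms, which is the same algebra), then take $\mathbb{E}_{\v}$ and resolve the absolute value via non-negativity of the KL divergence and the Bregman representative property of $\hat{\q}^{(3)}$ on the $M$-Simplex. Your added remarks on $\hat{\q}^{(3)}\in\mathrm{ri}(S)$, $c_{\phi}(\p)=0$, and the least-feasible-order clause are sound refinements of the same argument rather than a different approach.
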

\begin{proof}
Particularly,
\begin{align}
- \langle \q,  h_{\phi}(\p, \hat{\q}) \rangle = &\sum_{m = 1}^M q_m \left(  \log \frac{\hat{q}_m}{p_m} - \log \frac{q_m}{p_m} + \log \frac{q_m}{p_m} \right) \\ 
=&  \sum_{m = 1}^M q_m \log \frac{\hat{q}_m}{q_m} + \sum_{m = 1}^M q_m \log \frac{q_m}{p_m}.
\end{align} 
Taking the expectation of both sides we have
\begin{equation}\label{eq:KL_justif}
\mathbb{E}_{\v} [- \langle Q,  h_{\phi}(\p, \hat{\q})] = -\mathbb{E}_{\v}[D_{KL}(Q, \hat{\q})] + \mathbb{E}_{\v} [D_{KL} (Q, \p)],
\end{equation}
where by definition of the Bregman representative the distortion term $\mathbb{E}_{\v}[D_{KL}(Q, \hat{\q})]$ is minimized with $\hat{\q} = \hat{\q}^{(3)}$ since $Q$ lies in the $M-\mathrm{Simplex}$. Then, the principle of minimal Bregman representation is hence satisfied, having $|\mathbb{E}_{\v}[D_{KL}(Q, \hat{\q}^{(3)})] | = \mathbb{E}_{\v}[D_{KL}(Q, \hat{\q}^{(3)})]$ from the non-negativity of the Bregman divergence.
\end{proof}

As a result, $h_{\phi}(\p, \hat{\q}^{(3)})$ of \eqref{eq:KLfinal} is the optimal sample processing function for minimizing the KL-divergence in \eqref{eq:KLspec} on the $M-\mathrm{Simplex}$ by Definition \ref{def:OptSP}.
However, it should be noted that even though the desired form in $P2$ is obtained with minimal distortion in \eqref{eq:KLaprx}, the spectral search of \eqref{eq:KLfinal} effectively perturbs the domain of the minimization, as $\v \v^H$ is not necessarily mapped onto the $M-\mathrm{Simplex}$ by $\mathcal{A}$ in synthesizing $\q$. 
This is a fundamental limitation that arises in spectral methods, since normalization is not feasible in the range of $\mathcal{A}$. 
In consequence, the optimal spectral method of \eqref{eq:KLfinal} for minimizing \eqref{eq:KLspec} is equivalent to the following optimization problem:
\begin{equation}\label{eq:KL_altform}
\underset{\q \in X \subset \mathit{S}}{\text{minimize}} \ \| \q \|_1 \left( D_{KL}(\frac{\q}{\| \q \|_1}, \frac{\y}{\| \y \|_1}) - D_{KL}(\frac{\q}{\| \q \|_1}, \hat{\q}^{(3)}) \right).
\end{equation}

In the light of \eqref{eq:KL_altform}, a useful interpretation for minimizing KL-divergence by \eqref{eq:KLfinal} is through \emph{variational} principles.
In machine learning and probabilistic models, variational inference refers to the minimization of \eqref{eq:KLspec} when it is intractable due to the normalization of the reference distribution $\p$, by using a tractable objective that squeezes the true divergence towards zero \cite{wainwright2008graphical}.
For the purpose of constructing spectral methods, the intractability that must be alleviated is rather rooted in the exhaustive search over the set of rank-1, PSD matrices.   
Accordingly, under the principle minimal Bregman representation, we introduce two sources of distortion on the original objective such that: 
\begin{itemize}
\item The KL-divergence to sum-to-$1$ normalized phaseless measurements can at best be minimized \emph{relative} to the KL-divergence to the Bregman representative. 

\item The relative minimization is tractable only upto a scaling factor influenced by the spectrum of $\mathbf{A}^H \mathbf{A} \in \mathbb{C}^{N \times N}$, as $\| \q \|_1 = \| \mathbf{A} \v \|_2^2$ with $\| \v \| = 1$. 
\end{itemize}

Observe that the method preserves the original objective to drive KL-divergence from $\y/\| \y\|_1$ towards $0$, with a regularizer that incentives maximizing the divergence from the Bregman representative. 
Since $\| \q \|_1 > 0$, \eqref{eq:KL_altform} is guaranteed to search a minimizer that is closer to $\y$ on the $M-\mathrm{Simplex}$ than to $\hat{\q}^{(3)}$ in KL-sense.
Hence, although the $\| \q \|_1$ multiplier may bias the estimator to align towards principle subspaces of $\mathbf{A}^H \mathbf{A}$, \eqref{eq:KL_altform} strictly constrains its influence such that the relative divergence is preserved as negative. 
To this end, a key property for the quality of estimation with the spectral method in \eqref{eq:KLfinal} is having phaseless measurements $\y$ that are sufficiently distinct from the Bregman representative such that the minimization is dominated the relative KL-divergence term rather than the spectral content of the normal operator $\mathbf{A}^H \mathbf{A}$. 
Furthermore, the impact of the scaling vanishes as the measurement model is well-conditioned, since the condition number of $\mathbf{A}^H \mathbf{A}$ quantifies the maximum scaling factor that can influence the minimization.
The objective \eqref{eq:KLfinal} then approaches to the relative KL-divergence with $\kappa(\mathbf{A}^H \mathbf{A}) := \mathrm{sup}_{\| \v \| = 1} \| \mathbf{A} \v \|_2^2 / \mathrm{inf}_{\| \v \| = 1} \| \mathbf{A} \v \|_2^2 \rightarrow 1$.





It is well know that KL-divergence is non-symmetric in its arguments, and the minimization with respect to one argument versus the other yields different properties for the optimization. 
Indeed, minimizing KL divergence with respect to the first variable is known to be \emph{mode seeking} whereas minimization with respect to the second variable is known as \emph{mean seeking} due to the aggressive penalization of $q_m = 0$ in the latter when $p_m \neq 0$. 
This penalization promotes a more inclusive match over event set via the mean-seeking KL, which raises the question why to choose the KL-divergence in \eqref{eq:KLspec} to minimize over $\q$ given our specifications for applications to phase retrieval. 
Although it is ideally preferable to use the inclusive KL divergence knowing that $\p$ already lies in the domain of our search in the range of the lifted forward model,
tractably minimizing: 
\begin{equation}\label{eq:KLforw}
D_{KL}(\p, \q) = \sum_{m = 1}^M p_m \log \frac{p_m}{q_m} = \sum_{m = 1}^M q_m \left( \frac{p_m}{q_m} \log \frac{p_m}{q_m} \right),
\end{equation}
by a spectral method of the form of $P2$ is clearly not feasible as $\q$ solely appears within a non-linear logarithmic term. 
Simply pursuing the factoring in \eqref{eq:KLforw} to formulate $P2$ using the Bregman representative introduces an arbitrary distortion of $D_{KL}(\p, \hat{\q}) - \mathbb{E}_{v}[D_{KL} (\p, Q)])$ that is not minimal under Definition \ref{def:BregRep}.
As a result, \eqref{eq:KLforw} does not admit a form where Bregman representation is a justifiable approximation in formulating $P2$ as it is for the mode-seeking KL through \eqref{eq:KL_justif}.




\subsection{Itakura-Saito Divergence Minimization}

Itakura-Saito divergence can be considered as a metric that mitigates a weighting by $\q$ within the definition of the generalized $I$-divergence, as we derive it using an expectation with the uniform distribution.
A feature that further motivates the use of Itakura-Saito divergence is demonstrated through our derivation in Section \ref{sec:Sec4_1}. 
Namely for $\q, \p \in \mathbb{R}^M_{++}$, $D_{IS}(\q, \p)$ relates to the generalized $I$-divergence in the \emph{reverse sense}, i.e., it is derived from $D_I(\p, \q)$ where we replaced a factor of the first argument with the uniform distribution. 
This promotes Itakura-Saito as an alternative for minimizing the mean-seeking divergence in \eqref{eq:KLforw} using our framework. 

In particular, we aim to utilize the principle of minimal Bregman representation to minimize:
\begin{align}\label{eq:ISdiv}
D_{IS}(\q, \p)& =  \frac{1}{M}\sum_{m = 1}^M \left( \frac{q_m}{p_m} - 1 - \log \frac{q_m}{p_m} \right)\\
& =  \frac{1}{M}\sum_{m = 1}^M q_m \left( \frac{1}{p_m} - \frac{1 + \gamma}{q_m}  \right),
\end{align}
where the logarithmic term is denoted with 
\begin{equation}
\gamma := \frac{1}{M}\sum_{m = 1}^M \log \frac{q_m}{p_m} = \phi(\p) - \phi(\q)
\end{equation} 
having $\phi$ as the original potential function of the Itakura-Saito divergence by definition. 

In relevance to our framework, the Itakura-Saito divergence minimizing formulation has a clear distinction from its previously studied counterparts. 
That is, there is no single approximation that facilitates the formulation of $P2$ using the Bregman representative.
Note that an absolutely minimal use of the Bregman representative comes at the cost of fully eliminating the logarithmic component and the DC term in $-1$, thereby discarding components integral to the mismatch measure.   
To address this issue, we consider a hierarchical approach to approximating the original divergence under a minimal Bregman representation through \eqref{eq:minrep}. 
 
We first observe that $D_{IS}$ admits a point estimate for the logarithmic term through the Bregman representative such that 
\begin{equation}
{D}_{IS}(Q, \p)  \approx \frac{1}{M}\sum_{m = 1}^M \frac{Q}{p_m} - {1 + \mathbb{E}_v[ \gamma]}, \label{eq:P2IS_0}
\end{equation}
where given the measurement vectors $\{\a_m \}_{m = 1}^M$ and a distribution over $\v$, $\mathbb{E}_v[ \gamma] = \mathbb{E}_{\v}[\phi(\p) - \phi(Q)]$. 
In fact, utilizing the properties of Bregman representation we have the expected divergence in the form of:
\begin{equation}
\mathbb{E}_{\v}(d_{\phi} (Q, \s)) = \mathbb{E}_{\v}[\phi(Q)] - \phi(\s) - \mathbb{E}_{\v}[\langle Q - \s , \nabla \phi(\s) \rangle ].
\end{equation}
where by the first order optimality condition, the minimizer over $\s$ satisfies the Jensen gap in \eqref{eq:JensenGap} as
\begin{equation}\label{eq:bregmin}
\text{min}_{\s \in \mathit{S}}\mathbb{E}_{\v}(d_{\phi} (Q, \s)) = \mathbb{E}_{\v}(d_{\phi} (Q, \hat{\q})) = \mathbb{E}_{\v}[\phi(Q)] - \phi(\hat{\q}).
\end{equation}
Hence via \eqref{eq:bregmin}, the Bregman representative is the best estimator for $\mathbb{E}_{\v}[\phi(Q)]$. 
Then, from the separability of the terms in \eqref{eq:P2IS_0} under the expectation facilitate a formulation of $P2$ 
using the Bregman representative $\hat{\q}$ in \eqref{eq:P2IS_0} as:
\begin{align}
&D_{IS}(\q, \p) \approx \frac{1}{M}\sum_{m = 1}^M q_m \left( \frac{1}{p_m} - \frac{1 + \hat{\gamma}}{\hat{q}_m} \right), \label{eq:P2IS_1} \\
&\hat{\x}_{IS}^{(\hat{\gamma})} := \underset{\| \v \| = 1}{\text{argmax}} \ \v^H \left[\sum_{m = 1}^M \left( \frac{(1 + \hat{\gamma})}{\hat{q}_m} - \frac{1}{p_m} \right)  \a_m \a_m^H \right] \v,\label{eq:ItaSai_gmma}
\end{align}
where $\hat{\gamma} = \phi(\p) - \phi(\hat{\q})$. 
\begin{proposition}
Let $P2$ in \eqref{eq:dmineq2} be constructed as \eqref{eq:ItaSai_gmma}, such that for any $\p \in \mathrm{si}(S)$, $h_{\phi}(\p, \hat{q})$ denotes the sample processing function as:
\begin{equation}\label{eq:optIS}
[h_{\phi}(\p, \hat{q})]_m := [\mathcal{T}(\y)]_m = \frac{(1 + \hat{\gamma})}{\hat{q}_m} - \frac{1}{p_m}
\end{equation}
where $S$ is the domain of choice for $Q$, with $\hat{\q}$ denoting its Bregman representative per Definition \ref{def:BregRep}. 
Then \eqref{eq:ItaSai_gmma} is a spectral method under minimal Bregman representation such that \eqref{eq:P2IS_1} is on expectation the best approximation for the original Itakura-Saito divergence in \eqref{eq:ISdiv} with
\begin{equation}
\big| \mathbb{E}_{\v} [- \langle Q,  h_{\phi}(\p, \hat{\q}) \rangle] - \mathbb{E}_{\v} [D_{IS} (Q, \p)] \big| = \mathbb{E}_{\v}[D_{IS}(Q, \hat{\q})].
\end{equation}
\end{proposition}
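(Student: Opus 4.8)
The plan is to mirror the verification used for the KL proposition: I would expand the linear functional $-\langle Q, h_\phi(\p,\hat\q)\rangle$ induced by the sample processing function in \eqref{eq:optIS}, re-express it through the two Itakura–Saito divergences $D_{IS}(Q,\p)$ and $D_{IS}(Q,\hat\q)$, and show that after applying $\mathbb{E}_{\v}[\cdot]$ the surplus collapses exactly to the Jensen gap $\mathbb{E}_{\v}[D_{IS}(Q,\hat\q)]$. Concretely, substituting $[h_\phi(\p,\hat\q)]_m = (1+\hat\gamma)/\hat q_m - 1/p_m$ into $-\tfrac{1}{M}\sum_m Q_m[h_\phi]_m$ splits it into a term $\tfrac{1}{M}\sum_m Q_m/p_m$ and a term $-(1+\hat\gamma)\tfrac{1}{M}\sum_m Q_m/\hat q_m$, which is the natural starting point.

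The first key step is to recover the two divergences from these ratio sums using the Burg-entropy potential $\phi(\cdot)=-\tfrac{1}{M}\sum_m\log(\cdot)_m$. Writing $D_{IS}(Q,\p)=\tfrac{1}{M}\sum_m Q_m/p_m -1-\gamma_{Q,\p}$ with $\gamma_{Q,\p}=\phi(\p)-\phi(Q)$, and likewise $D_{IS}(Q,\hat\q)=\tfrac{1}{M}\sum_m Q_m/\hat q_m -1-\gamma_{Q,\hat\q}$ with $\gamma_{Q,\hat\q}=\phi(\hat\q)-\phi(Q)$, I would form the combination $-\langle Q,h_\phi\rangle - D_{IS}(Q,\p) + D_{IS}(Q,\hat\q)$. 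The two copies of $\tfrac{1}{M}\sum_m Q_m/p_m$ cancel outright, and collecting the $\tfrac{1}{M}\sum_m Q_m/\hat q_m$ contributions leaves $-\hat\gamma\,\tfrac{1}{M}\sum_m Q_m/\hat q_m + (\gamma_{Q,\p}-\gamma_{Q,\hat\q})$.

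The decisive algebraic observation — the analogue of the telescoping in the KL argument — is that the $Q$-dependent entropy $\phi(Q)$ cancels in the difference of logarithmic terms, since $\gamma_{Q,\p}-\gamma_{Q,\hat\q}=[\phi(\p)-\phi(Q)]-[\phi(\hat\q)-\phi(Q)]=\phi(\p)-\phi(\hat\q)=\hat\gamma$, which is exactly the deterministic point estimate baked into $h_\phi$. The residual therefore reduces to $\hat\gamma\big(1-\tfrac{1}{M}\sum_m Q_m/\hat q_m\big)$. Taking $\mathbb{E}_{\v}[\cdot]$ and invoking the defining property of the Bregman representative $\hat q_m=\mathbb{E}_{\v}[Q_m]$ from Definition \ref{def:BregRep} forces $\tfrac{1}{M}\sum_m \mathbb{E}_{\v}[Q_m]/\hat q_m = 1$, so the residual vanishes in expectation and $\mathbb{E}_{\v}[-\langle Q,h_\phi\rangle]-\mathbb{E}_{\v}[D_{IS}(Q,\p)]=-\mathbb{E}_{\v}[D_{IS}(Q,\hat\q)]$. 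Non-negativity of the Bregman divergence then removes the absolute value and yields the claimed identity, establishing minimal Bregman representation per Definition \ref{def:MinRep}.

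I expect the main obstacle to be bookkeeping the two distinct uses of the representative that the hierarchical approximation \eqref{eq:P2IS_0}--\eqref{eq:P2IS_1} invokes — first to point-estimate the logarithmic term $\gamma$ by $\hat\gamma$, and then to linearize the coefficient $(1+\gamma)/q_m$ by replacing $1/q_m\mapsto 1/\hat q_m$ — and in confirming that their combined distortion is \emph{exactly} the Jensen gap rather than a strictly larger error. The subtle point is that $\hat\gamma=\phi(\p)-\phi(\hat\q)$ differs from $\mathbb{E}_{\v}[\gamma]=\phi(\p)-\mathbb{E}_{\v}[\phi(Q)]$ by precisely $\mathbb{E}_{\v}[\phi(Q)]-\phi(\hat\q)$, which is itself the Jensen gap of \eqref{eq:JensenGap}; the computation above shows this discrepancy is absorbed so that equality, and not merely the inequality implied by non-negativity, holds in \eqref{eq:minrep}.
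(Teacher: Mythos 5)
Your proof is correct and follows essentially the same route as the paper's: both hinge on the cancellation of $\phi(Q)$ in the difference of logarithmic terms (so the distortion reduces to $\hat{\gamma}$ versus $\mathbb{E}_{\v}[\gamma]$, i.e.\ the Jensen gap) and on the mean property $\hat{q}_m = \mathbb{E}_{\v}[Q_m]$ of the Bregman representative. The only difference is organizational — you form the three-term combination $-\langle Q, h_{\phi}\rangle - D_{IS}(Q,\p) + D_{IS}(Q,\hat{\q})$ and show its residual $\hat{\gamma}\bigl(1 - \tfrac{1}{M}\sum_m Q_m/\hat{q}_m\bigr)$ vanishes in expectation, whereas the paper adds and subtracts $(1+\gamma)/q_m$ to isolate $D_{IS}(\q,\p)$ first and then invokes \eqref{eq:JensenGap} — which is the same algebra in a different order.
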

\begin{proof}
Considering the approximate form in \eqref{eq:P2IS_1} we have
\begin{align}
- \langle \q,  h_{\phi}(\p, \hat{\q}) \rangle &= \sum_{m = 1}^M \frac{q_m}{M} \left( \frac{1}{p_m} - \frac{1 + \hat{\gamma}}{\hat{q}_m} +  \frac{1 + {\gamma}}{{q}_m} -  \frac{1 + {\gamma}}{{q}_m} \right) \\ 
&=  D_{IS}(\q, \p) +  \frac{1}{M}\sum_{m = 1}^M 1 + \gamma - \frac{q_m}{\hat{q}_m}(1+\hat{\gamma}),
\end{align}
where $\hat{q}_m > 0$ since $\hat{\q} \in \mathrm{ri}(S) \subset \mathbb{R}_{++}^M$. 
Then, taking the expectation over the random variable $Q$, we have
\begin{align}
\mathbb{E}_{\v}[- \langle \q,  h_{\phi}(\p, \hat{\q}) \rangle  ] &= \mathbb{E}_{\v}[D_{IS}(\q, \p)] + 1 + \mathbb{E}_{\v}[\gamma] \\
 &- \frac{1}{M}\sum_{m = 1}^M \mathbb{E}_{\v}[\frac{q_m}{\hat{q}_m}(1 + \hat{\gamma})],
\end{align}
which by definition yields
\begin{equation}
\mathbb{E}_{\v}[- \langle \q,  h_{\phi}(\p, \hat{\q})] = \mathbb{E}_{\v}[D_{IS}(\q, \p)] + (\mathbb{E}_{\v}[\gamma] - \hat{\gamma}).
\end{equation}
Furthermore, from the definition of 
$\gamma$ the distortion corresponds to the Jensen gap, which \eqref{eq:JensenGap} satisfies \eqref{eq:bregmin}. 
Thereby \eqref{eq:P2IS_1} is the best approximation on the original divergence in \eqref{eq:ISdiv} and the principle of minimal Bregman representation is satisfied per Definition \ref{def:MinRep}, with \eqref{eq:optIS} as the optimal sample processing function for minimizing the Itakura-Saito divergence per Definition \ref{def:OptSP}. 
\end{proof}

We remark that there is a key insight of the framework present in our formulation of \eqref{eq:IS_spec}. 
Observe that in obtaining $P2$ under minimal Bregman representation, the $\p$ dependent term $\phi(\p)$ in $\gamma$ could be discarded into $c_{\phi}(\p)$ along with the ``DC'' term of $-1$. 
However, such a discard has no motivation towards attaining the minimal Bregman representation, as with or without the inclusion of these terms in the spectral method the order of approximation using the Bregman representative is identical. 
On the other hand, via the full inclusion of $\gamma$, the original form of the Bregman divergence is preserved using the minimal distortion estimate for the scaling between the reciprocal terms within $h_{\phi}$. 
Thereby, a minimal Bregman representation is obtained with $c_{\phi}(\p) = 0$.

Itakura-Saito is a divergence over any subset in $\mathbb{R}_{++}^M$, i.e., any $\q, \p$ where $q_m > 0, p_m > 0, \forall m = 1, \cdots M$.
According to the specification of the domain of $Q$ in $\mathbb{R}_{++}^M$, \eqref{eq:ItaSai_gmma} yields different spectral methods. 
Setting the domain of $Q$ as the image of the set of rank-1, PSD matrices in the range of $\mathcal{A}$, we obtain the ideal Itakura-Saito divergence minimizing spectral method with $P2$ using the Bregman representative $\hat{\q}^{(2)}$ in \eqref{eq:rep2} with $\p = \y/\| \x \|^2$, where
\begin{equation}\label{eq:IS_spec}
h_{\phi}(\p, \hat{\q}^{(2)}) = \frac{(1 + \phi(\y/\| \x \|^2) - \phi(\hat{\q}^{(2)}))}{\| \a_m \|^2 / N} - \frac{\| \x \|^2}{y_m}.
\end{equation}
Clearly for \eqref{eq:IS_spec} to be realizable, we need the ideal on normalization on $\y$ so that $\p \in S$ for the original divergence. 
Alternatively, considering the $M-\mathrm{Simplex}$ as the domain of $Q$ for generic applications when a reliable estimate for $\| \x \|^2$ is unavailable, appropriately setting $\p = \y/ \| \y \|_1$ and $\hat{\q} = \hat{\q}^{(3)}$ yields a sample processing function of:
\begin{equation}
h_{\phi}(\p, \hat{\q}^{(3)}) = \frac{(1 + \phi(\y/\| \y \|_1) - \phi(\hat{\q}^{(3)}))}{\| \a_m \|^2 / (\sum_{i = 1}^M \| \a_i \|^2 )} - \frac{\sum_{i = 1}^M y_i}{y_m}. \label{eq:SP_Msimplex}
\end{equation}
Consequently, on the $M-\mathrm{Simplex}$ the Itakura-Saito divergence is minimized upto a scale of $\| \q \|_1$ as encountered in the KL-divergence minimizing spectral method, since the domain of the minimization in $P2$ undergoes a mapping by as $\| \q \|_1 = \| \mathbf{A} \v \|_2^2$.

Notably, for the Gaussian sampling model both spectral methods are feasible, and the two formulations reach the identical form with high probability as:
\begin{equation}
\frac{1}{M}\sum_{m = 1}^M \| \a_m \|^2 \approx N, \quad \frac{1}{M} \sum_{m = 1}^M y_m \approx \| \x \|^2. 
\end{equation}
Furthermore, observe that in the Gaussian sampling setting with i.i.d. samples from $\a_m \sim \mathcal{CN}(0,1)$, $\gamma$ takes the form of a \emph{sample mean}, where for large $M$ we have
\begin{equation}
\gamma \rightarrow \mathbb{E}_{\a_m \sim \mathcal{CN}(0,1)}\left[ \log \frac{| \langle \a_m, \v \rangle |^2}{ | \langle \a_m , \x/\| \x \| \rangle |^2}\right],
\end{equation}
for any $\v$ with $\| \v \| = 1$ from the strong law of large numbers. 
Then, from the unitary invariance of Gaussian distribution, we obtain an estimate of $\hat{\gamma} = 0$ without invoking the Bregman representative. 
Subsequently the spectral method reduces to:
\begin{equation}
\hat{\x}_{IS}^{(0)} := \underset{\| \v \| = 1}{\text{argmax}} \ \v^H \left[\sum_{m = 1}^M \left( \frac{1}{\hat{q}_m} -\frac{1}{p_m} \right)  \a_m \a_m^H \right] \v.\label{eq:ItaSai_gmma}
\end{equation}

\subsection{Itakura-Saito Optimality in the Gaussian Sampling Model}


\eqref{eq:ItaSai_gmma} reveals quite an interesting outcome of our formulation under the Gaussian sampling model.
Setting $\p$ as the normalized measurements, i.e., $\p = \y/ \lambda_0$, from the strong law of large numbers we have that $\gamma \rightarrow \mathbb{E}(\gamma)$ as $M \rightarrow \infty$, where the expectation is computed over the sampling vectors. 
Thereby, given any unit vector $\v$ and having i.i.d. standard complex Gaussian distributed $\a_m$ for $m = 1, \cdots M$ where $\alpha = M/N$ is fixed, $\gamma$ approaches to $0$ in the limit that $N \rightarrow \infty$ since $\mathbb{E}[\phi(\p)] = \mathbb{E}[\phi(\q)]$ through the invariance of spherical distributions to orthogonal transformations. 
As a result, the initial approximation of the Itakura-Saito divergence in \eqref{eq:P2IS_0} is asymptotically exact and tight with high probability in finite dimensions. 
Then, using the Bregman representative $\hat{\q}^{(1)}$ as derived under the statistical assumptions of Gaussian sampling vectors, yields the following spectral method:
\begin{equation}
\hat{\x}_{IS}^{(0)} := \underset{\| \v \| = 1}{\text{argmax}} \ \v^H \left[\sum_{m = 1}^M \left(1 - \frac{1}{p_m} \right)  \a_m \a_m^H \right] \v.\label{eq:ItaSai_v0}
\end{equation}

Namely, the Itakura-Saito divergence minimizing spectral method precisely corresponds to the limiting optimal sample processing function that is derived asymptotically $N \rightarrow \infty$ in \cite{luo2019optimal}. 
In reverse, the result of \cite{luo2019optimal} confirms that the framework we formulate by using the Bregman representative is the optimal manner to approximate the Itakura-Saito divergence for spectral estimation under the Gaussian sampling model assumption.
Indeed, observe that
\begin{align}
-\langle \q, h_{\phi}(\p, \hat{\q}) \rangle &= \frac{1}{M} \sum_{m = 1}^M \left( \frac{1}{\hat{q}_m} - \frac{1}{p_m} + \frac{1}{q_m} - \frac{1}{q_m} \right) q_m \\
&= \frac{1}{M} \sum_{m = 1}^M  \left( 1 - \frac{q_m}{p_m} \right) + \frac{\frac{1}{M}\sum_{m =1}^M q_m}{\mathbb{E}[q_m]} - 1
\end{align}
which is a consistent estimator of the ideal divergence as $\gamma \rightarrow 0$.

However, the particular setting of phase retrieval under the Gaussian sampling model highlights a rather interesting property. Under our general formalism, the best tractable spectral method we have for Itakura-Saito divergence minimization corresponds to using a sample processing function that is optimal over \emph{all possible choices} for $\mathcal{T}$ according to the analysis in \cite{luo2019optimal}. 

To this end, we revisit well established results in information theory and spectral density estimation literature that support this conclusion \cite{gray1980distortion}, which we utilize in the following proposition.
\begin{proposition}\label{prop:ARM}
Let $M$ phaseless measurements of the form $y_m = | \langle \a_m , \x \rangle |^2$, $m = 1, \cdots M$ be collected under a Gaussian sampling model, i.e., with $\a_m \sim \mathcal{CN}(0, \mathbf{I}_N)$ with $M = \alpha N$, where $\alpha > 1$ is fixed. 
For a $\v \in \mathbb{C}^N$ with $\| \v \|=1$, let $q_m = | \langle \a_m , \v \rangle |^2$. Then, letting $N \rightarrow \infty$, the spectral method in \eqref{eq:ItaSai_v0} with $p_m = M y_m / (\sum_{m}^M y_m)$ equivalently minimizes the discrimination information $I(P, Q)$ in \eqref{eq:limcri}, where $P$ and $Q$ are the power spectral densities of two zero mean Gaussian processes that have an auto-regressive model that is parameterized through $\x$ and $\v$ as:
\begin{align}
\tilde{a}_P[k] &= \frac{1}{\sqrt{M}} \sum_{m  = 1}^M \mathrm{e}^{\mathrm{j}2\pi k m/M} \a_m^H \x,\  k = 1, \cdots M \\
\tilde{a}_Q[k] &= \frac{1}{\sqrt{M}} \sum_{m  = 1}^M \mathrm{e}^{\mathrm{j}2\pi k/M} \a_m^H \v, \ k = 1, \cdots M
\end{align}
which have the $M$-point periodograms $A_P^M$, and $A_Q^M$, respectively, such that $A_P^M \rightarrow A_P(\mathrm{e}^{\mathrm{j} \omega})$, $A_Q^M \rightarrow A_Q(\mathrm{e}^{\mathrm{j} \omega})$ as $M \rightarrow \infty$ 
\begin{equation}\label{eq:autoreg}
P = \| \x \|^2 / | A_P(\mathrm{e}^{\mathrm{j} \omega}) |^2, \quad Q = 1 / | A_Q(\mathrm{e}^{\mathrm{j} \omega}) |^2.
\end{equation} 
\end{proposition}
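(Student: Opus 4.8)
The plan is to collapse the claim onto the limiting identity \eqref{eq:limcri} by reinterpreting the discrete Itakura--Saito objective that the method \eqref{eq:ItaSai_v0} minimizes as a Riemann approximation of the spectral-domain Itakura--Saito integral \eqref{eq:ItSait}. From the preceding development I may take as given that, under the Gaussian model in the regime $\gamma \to 0$, the maximization \eqref{eq:ItaSai_v0} over $\| \v \| = 1$ is equivalent to minimizing the discrete divergence
\[
D_{IS}(\q, \p) = \frac{1}{M}\sum_{m=1}^M \left( \frac{q_m}{p_m} - 1 - \log \frac{q_m}{p_m} \right),
\]
with $q_m = | \langle \a_m, \v \rangle |^2$ and $p_m = M y_m / (\sum_{i} y_i)$. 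The target is to show that, as $M \to \infty$ with $M = \alpha N$ fixed, this sum converges to the continuous divergence $D_{IS}(Q,P)$ of \eqref{eq:ItSait}, which by \eqref{eq:limcri} is proportional to the discrimination information, so that the optimizer of the spectral objective coincides with the minimizer of $I(P,Q)$.

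First I would set up the spectral-domain identification. Treating $\{\a_m^H \x\}_m$ and $\{\a_m^H \v\}_m$ as the driving sequences of the two auto-regressive models, their discrete Fourier transforms $\tilde a_P[k]$, $\tilde a_Q[k]$ generate the $M$-point periodograms $A_P^M$, $A_Q^M$, which converge to $A_P(\mathrm{e}^{\mathrm{j}\omega})$, $A_Q(\mathrm{e}^{\mathrm{j}\omega})$ as $M \to \infty$; the associated power spectral densities are then the all-pole forms in \eqref{eq:autoreg}. The crux is to verify that the discrete weights align with samples of these densities on the uniform grid $\omega_m = 2\pi m / M$, i.e. that $q_m$ and $p_m$ reproduce (up to a common scaling that cancels inside the ratio $q_m/p_m$) the values $Q(\omega_m)$ and $P(\omega_m)$. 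For the normalization I would invoke the concentration $\tfrac{1}{M}\sum_m y_m \to \| \x \|^2$, valid with high probability in the Gaussian model, so that $p_m \to y_m / \| \x \|^2$ matches the $\| \x \|^2$ factor carried by $P$ in \eqref{eq:autoreg}, while $Q$ carries unit energy consistent with $\| \v \| = 1$. Note that at the true solution $\v \propto \x$ one gets $q_m = p_m$, so both the forward and reverse continuous divergences vanish there, which is what ultimately pins down the common optimizer.

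With the pointwise identification in hand, the remaining step is analytic: the uniform grid $\{\omega_m\}$ makes $\tfrac{1}{M}\sum_m(\cdot)$ a Riemann sum for $\tfrac{1}{2\pi}\int_{-\pi}^{\pi}(\cdot)\, d\omega$, so $D_{IS}(\q,\p) \to D_{IS}(Q,P)$ as in \eqref{eq:ItSait}. Invoking Pinsker's limit \eqref{eq:limcri}, the discrimination information equals $\tfrac{1}{2} D_{IS}$, and since this prefactor is independent of $\v$, the minimizer over the unit sphere is preserved. Because $\gamma \to 0$ renders the approximation \eqref{eq:P2IS_0} asymptotically exact with high probability in the limit $N \to \infty$, the spectral method \eqref{eq:ItaSai_v0} and the minimization of $I(P,Q)$ share the same optimizer, which is the claim.

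The main obstacle I anticipate is the spectral-domain identification itself, and in particular reconciling the reciprocal structures: the measurement weights $q_m = | \langle \a_m, \v \rangle |^2$ are periodogram-type (numerator) quantities, whereas the densities $Q = 1/|A_Q(\mathrm{e}^{\mathrm{j}\omega})|^2$ in \eqref{eq:autoreg} are all-pole (denominator) forms, so one must argue carefully---through the auto-regressive/Yule--Walker correspondence and the convergence $A_Q^M \to A_Q(\mathrm{e}^{\mathrm{j}\omega})$---that the ratio $q_m/p_m$ recovers $Q(\omega_m)/P(\omega_m)$ on the grid. A secondary difficulty is that two limits are entangled: $N \to \infty$ is needed for $\gamma \to 0$ and for the energy concentration, while $M \to \infty$ drives the Riemann convergence; keeping these consistent under the fixed ratio $M = \alpha N$ and controlling the rates uniformly in $\v$ over the unit sphere is the delicate part.
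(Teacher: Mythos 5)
Your overall skeleton (periodogram identification, the Riemann-sum passage $\tfrac{1}{M}\sum_m \to \tfrac{1}{2\pi}\int$, then Pinsker's identity \eqref{eq:limcri}) matches the paper's proof, and your use of the concentration $\tfrac{1}{M}\sum_m y_m \to \|\x\|^2$ is also how the paper reads $\p$ and $\q$ as \emph{gain-normalized} periodograms. But the step you yourself flag as the ``main obstacle'' is exactly where the proposal breaks, and the repair you suggest would fail. The quantities $q_m$ and $p_m$ are periodogram-type (numerator) ordinates, so on the frequency grid they converge to the \emph{reciprocals} of the all-pole densities in \eqref{eq:autoreg}: $q_m \to 1/Q(\omega_m)$ and $p_m \to 1/P(\omega_m)$, hence $q_m/p_m \to P(\omega_m)/Q(\omega_m)$ --- not $Q(\omega_m)/P(\omega_m)$ as your identification requires. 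No amount of care with the Yule--Walker correspondence can reverse this. Moreover, even if your identification did hold, your limit would be $D_{IS}(Q,P) = 2I(Q,P)$, which by the asymmetry of the Itakura--Saito divergence is not the claimed quantity $I(P,Q)$.

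The paper closes this gap with two ingredients absent from your proposal. First, the reciprocal-swap invariance of the Itakura--Saito divergence, $D_{IS}(a,b) = a/b - 1 - \log(a/b) = D_{IS}(1/b,\,1/a)$: applied under the integral it gives $D_{IS}(P,Q) = D_{IS}\bigl(|A_Q|^2,\, |A_P|^2/\|\x\|^2\bigr)$, which is precisely \eqref{eq:limIS}, so the periodogram-domain limit of $D_{IS}(\q,\p)$ \emph{is} $D_{IS}(P,Q) = 2I(P,Q)$ with the arguments in the claimed order, and no inversion of the identification is ever needed. Second, your reading of $q_m = |\langle \a_m, \v\rangle|^2$ as a periodogram ordinate conflates driving-sequence samples with their DFT magnitudes; the paper legitimizes the periodogram reading via unitary invariance, writing $\y = |\mathbf{A}\x|^2 = |\mathbf{W}_M \tilde{\mathbf{A}}\x|^2$ with $\tilde{\mathbf{A}} = \mathbf{W}_M^H \mathbf{A}$ again Gaussian, as in \eqref{eq:PSD}. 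This makes the measurement-indexed sums genuine (in distribution) frequency-grid sums of periodograms of zero-mean i.i.d.\ Gaussian driving sequences, and it is also what supplies the ``two zero mean Gaussian processes'' hypothesis that Pinsker's identity \eqref{eq:limcri} needs. With those two ingredients in place, your Riemann-sum and Pinsker steps go through essentially as you wrote them.
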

\begin{proof}
In the proof, we use an abstract equivalence to Gaussian processes, which is based on the almost sure existence of a valid auto-regressive model of the form of \eqref{eq:autoreg}. Namely, given phaseless measurements $\y$ realized by Gaussian sampling, $\p$ and $\q$ are analogous to gain normalized periodograms via the unitary invariance property $\{ \a_m \}_{m = 1}^M$. 
Using the orthonormal $M$-point DFT matrix $\mathbb{W}_M$, we have that
\begin{equation}\label{eq:PSD}
\y := | \mathbf{A} \x |^2 = | \mathbf{W}_M \mathbf{W}^H_M \mathbf{A} \x |^2 = | \mathbf{W}_M \tilde{\mathbf{A}} \x |^2
\end{equation}
where $\tilde{\A}$ is a Gaussian sampling model via unitary invariance, and $\alpha > 1$ ensures the underlying linear map on the parameter domain is injective such that the search space of the spectral method facilitates a search of the periodogram generating samples in a manner that is one to one. 

Furthermore, let $g[m]$ denote the underlying $M$ samples that are generated by $\tilde{\A}$, which are zero-mean Gaussian i.i.d. by definition. 
Then, the reciprocal spectra is also generated by zero-mean i.i.d. Gaussian samples since $\mathbb{E}[g[m] \ast g^*[-m]] = \delta[m]$, which the sample auto-correlation tends to as $M \rightarrow \infty$. 
Thereby,
\eqref{eq:PSD} describes auto-regressive models for two zero mean Gaussian processes. 
Using the reciprocal symmetry of the Itakura-Saito divergence yields
\begin{equation}\label{eq:limIS}
D_{IS}(P, Q) = D_{IS} (| A_Q(\mathrm{e}^{\mathrm{j} \omega}) |^2, \frac{|A_P(\mathrm{e}^{\mathrm{j} \omega}) |^2}{\| \x \|^2}) = \lim_{M \rightarrow \infty} D_{IS}( \q, \p ).
\end{equation}

Then, for two zero mean Gaussian processes with spectral densities of $P$ and $Q$, we know \eqref{eq:limIS} tends to their discrimination information from \eqref{eq:limcri}, where the limit is  
is identical to that of the forward KL-divergence between the probability distributions of $M$-point ensembles as \eqref{eq:infcri}, with
\begin{equation}
\lim_{M \rightarrow \infty} D_{IS}( \q, \p ) = \lim_{M \rightarrow \infty} \frac{2}{M} I_M(P, Q) = 2 I(P, Q).
\end{equation}
Hence the proof is complete. 
\end{proof}

Accordingly, Proposition \ref{prop:ARM} connects the asymptotic result of \cite{luo2019optimal} to the more general setting of our framework, where it is an instance of a metric optimal spectral method. 
The global optimality is then a result of the asymptotic characterization of the Itakura-Saito divergence as a distortion measure that captures the discrimination information of two zero mean Gaussian processes. 
On the other hand, knowing that \eqref{eq:ItaSai_gmma} is the best approximation of Itakura-Saito divergence on expectation over $\v$, given a set of sampling vectors $\{ \a_m \}_{m = 1}^M$ that were realized under the Gaussian sampling assumption our framework implies a finite sample correction on the optimally derived sample processing function of \cite{luo2019optimal}.

\subsection{Relations to Orthogonality Promoting Methods}

As studied in Section \ref{sec:OrthMinnrom}, leveraging orthogonality information in the phaseless measurements is a fundamental approach investigated by several methods in the literature. 
Notably, the primary motivation of the framework we introduce in this paper is to analyze the internal mechanisms of spectral methods.
We achieve this through a tractable loss minimization perspective by utilizing universal properties of Bregman divergences. 
However, the robust loss based construction of orthogonality promoting methods is not governed by our abstract framework, as the $\ell_1$ norm is not a strictly convex function for the purpose of generating a Bregman divergence. 

Nonetheless, the spectral methods we optimally derive in this section have a connection to orthogonality promoting methods. 
This is direct consequence of the functional forms we obtain for the sample processing functions, through which, both the KL and IS minimizing spectral methods feature an \emph{indicator} function of the orthogonal plane to $\a_m$ as $y_m \rightarrow 0$. 
Thereby, in the presence of a measurement $y_m$ that encodes strict orthogonality, both methods penalize a non-zero synthesis of $\q_m$ with $-\infty$, which in turn constrains an estimate to align orthogonal to the corresponding sampling vector. 
Interestingly, this is a universal property that is beyond the scope of the original orthogonality promoting methods in the literature.
To similar effect, the aggressive penalization of these methods is reflected in the presence of near-orthogonality indicating measurements.
Hence, both spectral methods effectively promote orthogonality with respect to subsets of sampling vectors which are determined according to a \emph{score} assigned by the sample processing function.

Towards investigating the relation to orthogonality promoting methods in the literature, consider the \emph{reward} of an ideal match in the range of $\mathcal{A}$. 
Under strong law of large numbers, with the Gaussian sampling model the sums $\sum_{m} \a_m \a_m^H$ are tightly concentrated \cite{duchi2019solving}, and the original orthogonality promoting formulation of \cite{wang2018phase} has the following equivalence:
\begin{align}\label{eq:ell_1_0}
\sum_{m \in \mathit{I}_{S<}} \frac{1}{\| \a_m \|^2} \a_m \a_m^H &+ \sum_{m \in \mathit{I}^c_{S<}} \frac{1}{\| \a_m \|^2} \a_m \a_m^H \\
&= \sum_{m = 1}^M \frac{1}{\| \a_m \|^2} \a_m \a_m^H \approx \frac{M}{N} \mathbf{I}.
\end{align}
Essentially, the smallest eigenvalue-eigenvector pair that the orthogonality promoting method seeks from the first component in the left-hand side of \eqref{eq:ell_1_0}, is approximately attained by the leading eigenvector of the second term.
Thereby a spectral matrix can be synthesized using the complementary index set $\mathit{I}^c_{S<}$ of sampling vectors that have \emph{maximal correlation} to the ground truth. 
This is further generalized to powers of the intensity measurements as $y^k_{m \in \mathit{I}^c_{S<}}$ where $0 \leq k \leq 1$ which are introduced as weights to the synthesis of the spectral matrix \cite{wang2018phase}, with the orthogonality promoting method as a special case $k = 0$. 

Under the methodology of \cite{wang2018phase}, we can simply consider the reward of matching a measurement $y_m$ for several different methods including a non-orthogonality promoting ones such as the classical spectral method as
\begin{equation}
R_{0}(y_m) = y^{k+1}_m, \  0 < k < 1. 
\end{equation}
Clearly, sample truncation, or pruning are implemented to control this reward as $y_m$ gets large via clipping, or analogously damping the reward towards zero. 
On the other hand the ideal orthogonality promoting method pursues the minimal reward feasible via $k \rightarrow 0$, however in doing so loses all information in the measurements when synthesizing $\Y$ as $y_m^0 = 1$.
Accordingly, the lost information is installed into the synthesis via subset selection.  
In this sense, we concur that the orthogonality promoting method imitates the $\ell_1$ loss minimization through its \emph{linear} reward for a match in the phaseless measurements. 

Observe that this is precisely achieved by the Itakura-Saito divergence in \eqref{eq:ItaSai_v0}, where the reward of a match is as:
\begin{equation}
R_{IS}(y_m) = y_m - 1.
\end{equation}
Furthermore, an $\mathcal{O}(y_m)$ reward is assigned to all measurements at synthesis without any assumptions on the measurement model, or hand-tuning for subset selection as no information is lost at the synthesis.
Hence, an estimate is formed under the desired linear reward system for promoting orthogonality. 


Finally, we note that the KL-divergence minimizing formulation shares a fundamental connection to the orthogonality promoting method of \cite{wang2018solving} in that both pursue a dissimilarity to the flat signal, or analogous to our formulation, the uniform distribution. 
This relation stems from the fact that the methodology of \cite{wang2018solving} can be interpreted as searching for an estimate that synthesizes the measurements that are least aligned to the uniform distribution in a Euclidean-sense. 
To this end, our formulation utilizes the more natural metric of dissimilarity via penalizing KL-sense similarity to uniform distribution, i.e., the Bregman representative within its objective. 
Most significantly, such behaviors that relate to orthogonality are model independent under our framework, and orthogonality is promoted only to a degree that benefits the overall scheme at the synthesis equation. In essence, the dissimilarity from the uniform distribution is promoted independent of an inherent assumption on the sampling vectors, and captures a more general principle in the estimation task than those encountered in the literature.  

\section{Numerical Simulations}\label{sec:4_sec5}
In this section, we assess the practical effectiveness of our framework via a detailed comparison of the derived Bregman-divergence minimizing schemes with several prominent methods in literature. 
In addition to an overall comparison, we particularly highlight the impact of the modifications suggested on existing methods under our formulation. 
%

\subsection{Simulation Setup}
We evaluate the quality of estimation via the correlation coefficient, i.e.,
\begin{equation}\label{eq:corr_coef}
\rho( \x , \hat{\x}) = \frac{| \langle \x , \hat{\x} \rangle |}{\| \x \|}
\end{equation}
where $\hat{\x}$ is the unit-norm estimate of $\x$ produced by the spectral method of evaluation given the measurement vectors $\{ \a_m \}_{m = 1}^M$.
Particularly, we use Monte-Carlo simulations for our assessment of each choice of $\mathcal{T}$ in the spectral method, where for a fixed unknown $\x$ we independently pick the $M = \alpha N$ number of measurement vectors to evaluate an empirical average of \eqref{eq:corr_coef} over $l = 1, \cdots L$ instances of the measurement map. 
Given $s = 1, \cdots S$ different ground truths, the overall performance of each spectral is expressed as:
\begin{equation}\label{eq:empr_corr}
\rho_e(\mathcal{T}) := \frac{1}{SL}\sum_{s = 1}^S \sum_{l = 1}^L \rho(\x^{s}, \hat{\x}_{\mathcal{T}}(I^{s,l}) )
\end{equation}
where $I^{s,l} = (\x^s, \{ \a_m^l \}_{m = 1}^M)$ denote the $(s, l)^{th}$ realization of the input object fed to the spectral method in generating the estimate for the sample $\x^s$ using the pre-processing function $\mathcal{T}$.  

In our numerical simulations, we use the implementations provided by the PhasePack library in \cite{chandra2019phasepack} for the methods from prior literature that are available. 
These include the implementations of the classical and truncated spectral methods, the orthogonality promoting method, and the weighted maximal correlation method, as well as the first optimally derived sample processing function of \cite{mondelli2017fundamental} under the proper normalization of measurements. 
For the purpose of our comparisons, we integrate our modifications on PhasePack to implement the optimal sample processing function \cite{luo2019optimal}, and the Bregman divergence minimizing methods we derive in Section \ref{sec:4_sec3}.

It should be noted that we include our own implementation for spectral estimation via the minimum norm solution as outlined in Section \ref{sec:4_sec1}, and not the closely related linear spectral estimators designed and evaluated in the original work of \cite{ghods2018linear} in relevance to our framework and its comparisons.
We provide Table \ref{table:specmethods} which associates each spectral method with its corresponding pre-processing function, $\bar{y}_m$ denoting the mean-normalized measurements, $i(\cdot)$ denoting the indicator function of its input indices, $\mathit{I}_{S = 5M/6}$ denoting the index set of lowest $5M/6$ magnitude measurements, and $\mathit{I}^{c}$ denoting its complement, where $5M/6$ is picked under the specifications of the source work in \cite{wang2018solving}.  

\begin{table}[!htbp]
\caption{Table of Spectral Methods} 
\centering 
{
\small
  \begin{tabular}{p{1.3in}p{1.8in}}
 
\hline \hline
  Method & $[\mathcal{T}(\y)]_m$ \\ 
  \hline 
Classical & $y_m$ \\
Truncated & $y_m \cdot i(\bar{y}_m < 3)$ \\
Orthogonal & $1 \cdot i(\mathit{I}_{S = 5M/6})$ \\
Weighted & $ {y^{1/4}_m} \cdot i(\mathit{I}^c_{S = 5M/6})$ \\
Minimum Norm & $[(|\A \A^H |^2)^{-1} \y]_m$ \\
MM Optimal & $ (\bar{y}_m - 1)/(\bar{y}_m + \sqrt{\alpha} - 1)$ \\
LL Optimal & $ 1 - 1/\bar{y}_m$ \\
KL Minimizing & $\log(\bar{y}_m / (M \hat{q}_m))$ \\
IS Minimizing & $ \left( (1 - \hat{\gamma})/\hat{q}_m - M / \bar{y}_m \right) \cdot i(\bar{y}_m > 0)$ \\
$\ell_2$ Minimizing & $2 \bar{y}_m - \hat{q}_m$ \\
\hline \hline
  \end{tabular}}
  \label{table:specmethods}
\end{table}

\subsection{Gaussian Sampling Data}

We first consider the general statistical setting under the Gaussian sampling model with $N = 256$, where we independently sample i.i.d. entries of $\a_m$ with $A \sim \mathcal{N}(0, 1/2) + \mathrm{j}\mathcal{N}(0, 1/2)$ for $m = 1, 2, \cdots M$. 
This is comparable to the problem size considered for the Gaussian sampling model in \cite{ghods2018linear} for the particular setting of communication systems. 
As such, the $\a_m \in \mathbb{C}^N$ are i.i.d. samples from the $N$-dimensional standard complex Gaussian model. 

In our assessment via Monte-Carlo simulations, we generate independent instances of the collection of $\{ \a_m \}_{m = 1}^M$ by sampling from $A$ in an inner loop for a fixed underlying signal of interest $\x$. 
The simulation setup features an outer loop that generates a ground truth signal of interest from smoothing a real-valued Gaussian random vector sampled from $X \sim \mathcal{N}(0, \mathbf{I})$ via keeping the central $2B$-DFT coefficients as 
\begin{equation}\label{eq:gtruth}
x_n = \sum_{ k = N/2 - B + 1}^{N/2 + B} \tilde{X}(k) \ \mathrm{e}^{ 2\pi \mathrm{j} n k / N}
\end{equation}
where $\tilde{X}(k)$ denotes the $k^{th}$ shifted DFT coefficient of the underlying sample $\tilde{\x}$ from $X$. 
For \eqref{eq:gtruth}, the phaseless measurements \eqref{eq:phaless} are synthesized using the collection of $\{ \a_m \}_{m = 1}^M$ sampled from $A$ in the inner loop, and spectral estimation is performed for the sampled object $I := (\x, \{ \a_m \}_{m = 1}^M)$. 

We then proceed with a modestly sized Monte-Carlo simulation with $S = 25$ and $L = 50$, while performing the experiment parametric over the choice of $B$, with $B$ taking the values $[10, 20, 30, 40, 50]$. 
Overall, our numerical simulations indicate reliable performance across the varying bandwidths for each method, with no variation in empirical performance with respect to $B$.  

\begin{figure}
\centering
\includegraphics[scale=0.3]{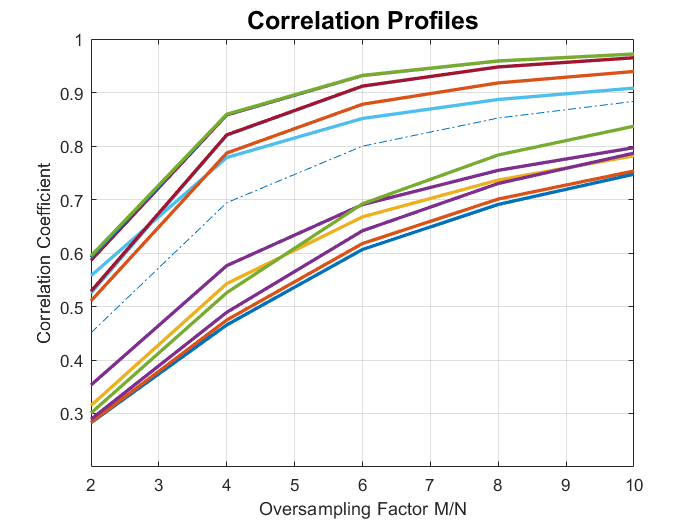}
\includegraphics[scale=0.3]{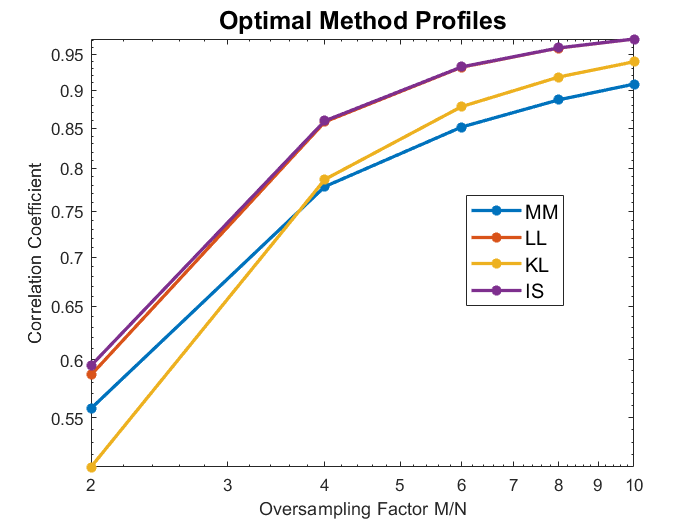}
\caption{\emph{Correlation profiles of each tested spectral method.} (Left) Demonstration of the general trend observed in the spectral methods. Namely, the high correlations are obtained via the optimal methods of $MM$ and $LL$, and our Bregman loss minimizing framework with information theoretic metrics of $KL$ and $IS$ divergences. The methods under the clear divide depicted by the dotted line are $\ell_2$ loss based, or orthogonality promoting methods. (Right) Demonstrates the trend of the optimal spectral methods in logarithmic scale.}
\label{fig:Chp4_fig1}
\end{figure}


Our plots for the correlation profiles averaged over parameter $B$ along with the Monte-Carlo simulation parameters are provided in Figure \ref{fig:Chp4_fig1}. 
The numerical simulations indicate a clear distinction of spectral methods into two classes due to the clustering of the correlation profiles over varying oversampling ratios. 
We provide Table to clearly demonstrate how each method performs with respect to others. 
Ultimately, the optimally designed methods of $MM$, and $LL$ join our novel Bregman divergence minimizing spectral methods in the top cluster, whereas the remaining methods exhibit considerably lower correlation profiles across all tested oversampling factors. 
It should be noted that the minimum norm solution attains the closest performance to the top cluster as oversampling ratios get higher, which is in agreement with the discussion in \ref{sec:OrthMinnrom} and the observations in \cite{ghods2018linear}. 
However, this improvement is not at a level that is sufficient to compete with the optimally designed spectral methods, which practically attain near exact-recovery level by $\alpha = 10$, and yield almost double the correlation of other methods at severely sample starved regime of $\alpha = 2$. 

In the comparison among the optimally designed spectral methods, observe that our $KL$-divergence minimizing formulation performs at a comparable level despite the fact that its counterparts are designed optimally solely for the Gaussian sampling model. 
This highlights the generality of our approach, as our formulation is based on a sense of metric-optimality, without utilizing Gaussianity of the sampling vectors in the design. 
In addition, we use our $IS$-divergence minimizing formulation with the Bregman representative of $\hat{\q}^{(2)}$, which we promote as a finite sample correction for the asymptotically optimal formulation of $LL$. 
Indeed, observe that the $IS$ minimizing spectral method mirrors the performance of the $LL$ method with a \emph{slight improvement}, which supports our argument for the finite sample correction in the Gaussian sampling model. 
Furthermore, pairing with our asymptotic optimality result, \ref{fig:Chp4_fig1} confirms the global optimality of the Itakura-Saito minimizing spectral method for the Gaussian sampling model. 

\subsection{Real Optical Imaging Data}

In order to assess the capability of our framework on real-world problems that are governed by non-Gaussian models, we consider the optimal imaging application and corresponding real data set of \cite{metzler2017coherent}.
In particular, a spatial light modulated (SLM) imaging setup is considered, where an intensity-only $256 \times 256$ detector senses the input images through the propagation medium.
Authors then send amplitude or alternatively phase modulated signals as training data to sense the linear mapping from the image to the detector as a first stage of a double phase retrieval using their approximate message passing-based phase retrieval algorithm. 
The sensed measurement matrices are then utilized to reconstruct images generated by the identical SLM setup which are sensed without phase at the detectors. 

For our simulations, we consider both the $16 \times 16$ and $40 \times 40$ inversion tasks, in which we test the performance of our spectral estimation framework under $\alpha = 5$, and $\alpha = 10$ subsets of the $256 \times 256$ measurements which are chosen at random in each iteration of a Monte-Carlo simulation.
We then provide the average correlation profiles in comparison to other methods in literature provided in Table \ref{table:specmethods}. 
A notable distinction of the two problems beyond their dimensions is the residual error of the provided sensing matrices, where the measurement vectors that map the $16 \times 16$ image have an average $0.02$ normalized residual, compared to an average $0.2$ residual in mapping of the $40 \times 40$ images.
As a result, the $40 \times 40$ image reconstruction problem is operated under considerably more noise, which serves as a good benchmark for the performance of the spectral methods in non-idealized conditions for inference. 

Considering that additive noise is present in the measurements, we also consider the noise-model optimal sample processing functions provided in \cite{luo2019optimal} for Poisson, and Gaussian noise as:
\begin{align}
[\mathcal{T}_{P}(\y)]_m &= \frac{\bar{y}_m - \kappa_0}{\bar{y}_m + 1}, \\ 
[\mathcal{T}_{G}(\y)]_m  &= 1 - 1/\left(\bar{y}_m - \sigma_0^2 + \frac{\sigma_0 \Phi'(\bar{y}_m / \sigma_0 - \sigma_0)}{\Phi( \bar{y}_m / \sigma_0 - \sigma_0)} \right),
\end{align}
where $\kappa_0$ is the scaling factor in the Poisson model, and $\sigma_0$ is the variance of the additive zero-mean Gaussian noise. 
However, it should be stressed that the noise-optimal results of \cite{luo2019optimal} are also under the Gaussian model assumption.
We provide the histogram of the sensed measurement maps in Figure \ref{fig:Chp4_Fig2}, which clearly do not follow a Gaussian distribution, and more so resemble a Laplace distribution. 
For the noise optimal methods, we estimate parameters $\kappa_0$, and $\sigma^2_0$ using the $ML$-estimators on training data to yield $\kappa_0 = 0.989, \sigma^2_0 = 0.008$, and $\kappa_0 = 0.896, \sigma^2_0 = 0.089$ for the $16 \times 16$ and $40 \times 40$ sized problems, respectively. 

\begin{figure}
\centering
\includegraphics[scale=0.3]{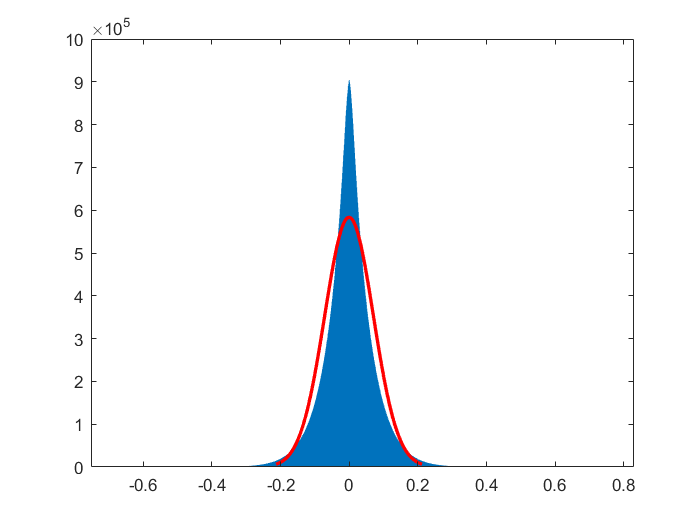}
\includegraphics[scale=0.3]{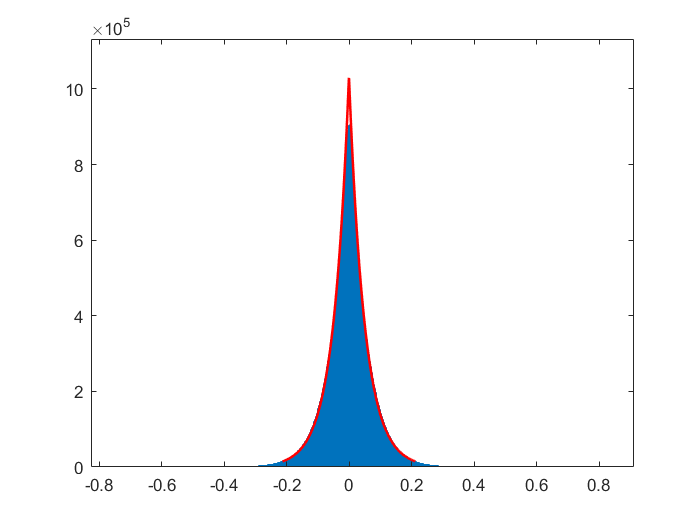}
\caption{Histogram of the provided sensing matrix by \cite{metzler2017coherent} for the $40 \times 40$ SLM, and the Gaussian (left) and Laplace (right) distribution fits to its real-part, using $1600$ bins. Imaginary part mirrors the real part in terms of the fit.}
\label{fig:Chp4_Fig2}
\end{figure}

Finally, we include two versions of the Itakura-Saito minimizing spectral method in order to highlight the significance of our inclusion of the full $\hat{\gamma}$ term in formulating the minimal distortion surrogate. 
Namely, we consider the Itakura-Saito formulation where the influence of $\gamma$ is discarded from the synthesis equation, which is derived as optimal under the Gaussian sampling model assumption,
and the hierarchical formulation we propose using the point estimate $\hat{\gamma}$ in \eqref{eq:ItaSai_gmma}.
These correspond to sample processing functions of
\begin{align}
[\mathcal{T}_{IS}^0(\y)]_m = \frac{1}{\hat{q}_m} - \frac{M}{\bar{y}_m}, 
[\mathcal{T}_{IS}^{opt}(\y)]_m = \frac{1 + \hat{\gamma}}{\hat{q}_m} -  \frac{M}{\bar{y}_m}
\end{align}
where we use the Bregman representative of the $M$-Simplex for $\hat{\q}$, with $$\hat{\gamma} = \phi(\y/\|\y\|_1) - \phi(\hat{\q} ).$$

\subsubsection{$\mathbf{16 \times 16}$ Results}

\begin{figure}
\centering
\includegraphics[scale=0.3]{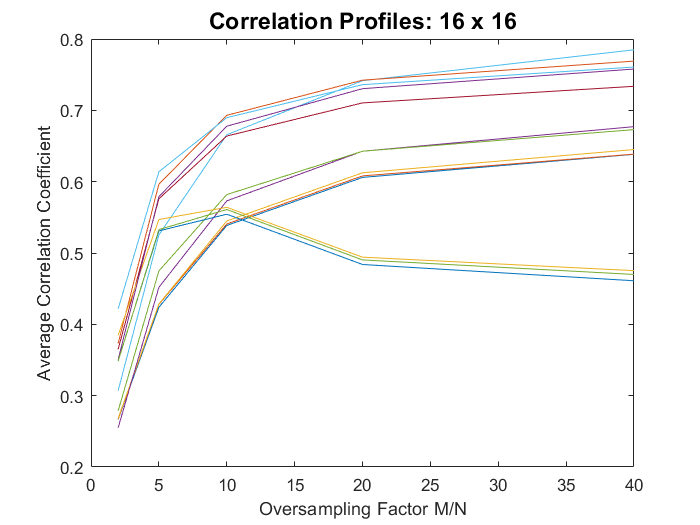}
\includegraphics[scale=0.3]{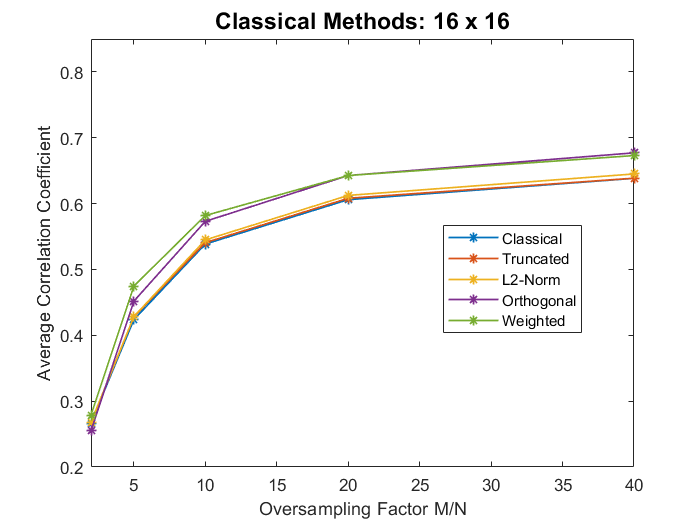}
\includegraphics[scale=0.3]{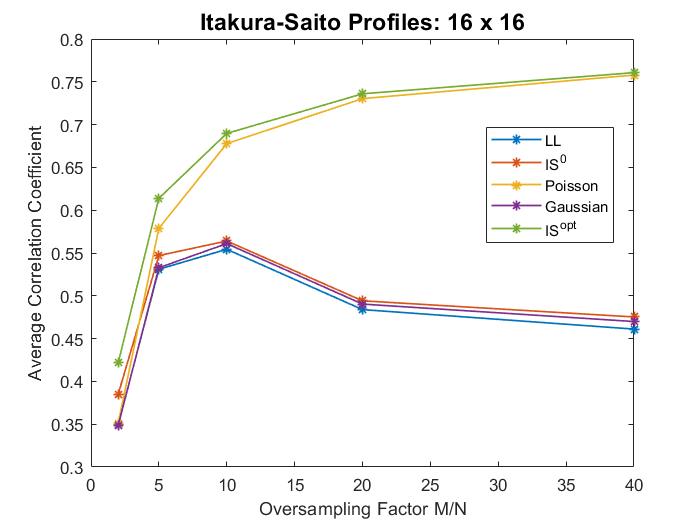}
\includegraphics[scale=0.3]{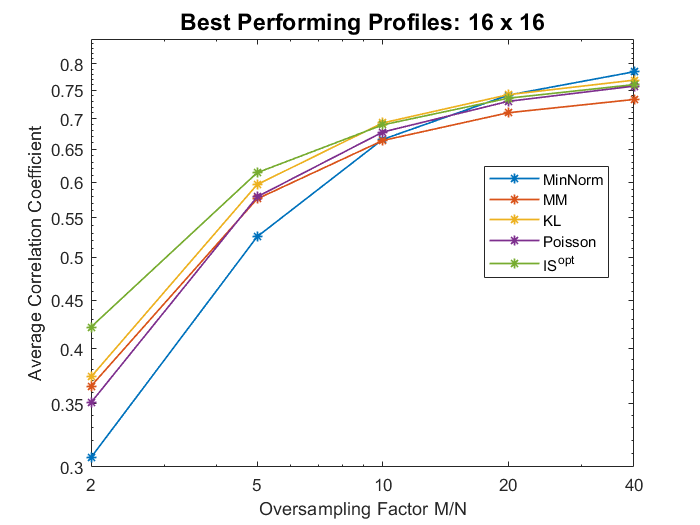}
\caption{Average correlation profiles over the test images, at varying oversampling factors. (Top Left) Full plot including all methods. (Top Right) Classical and orthogonality promoting methods. (Bottom Left) Itakura-Saito related methods. (Bottom Right) Top performing methods in $\log$-scale.}
\label{fig:Chp4_fig3}
\end{figure}

We present our results on the average correlation coefficients computed by \eqref{eq:empr_corr} over the $5$ test images used in our Monte-Carlo simulation at oversampling factors of $[2, 5, 10, 20, 40]$. 
Due to the fact that recovery without phase is only upto a global phase factor, we track the absolute value images and provide the average reconstructions.
We provide the ground truth images in the $16 \times 16$ problem and the average reconstructed absolute value images for the classical spectral method, minimum norm solution, our KL divergence and the optimal IS divergence methods in Figures \ref{fig:Chp4_Im1} to \ref{fig:Chp4_Im5}.
Figures \ref{fig:Chp4_fig3} demonstrate the average correlation profile evolution of all the tested methods with respect to the tested oversampling factors.

\begin{figure*}
\centering
\includegraphics[scale=0.25]{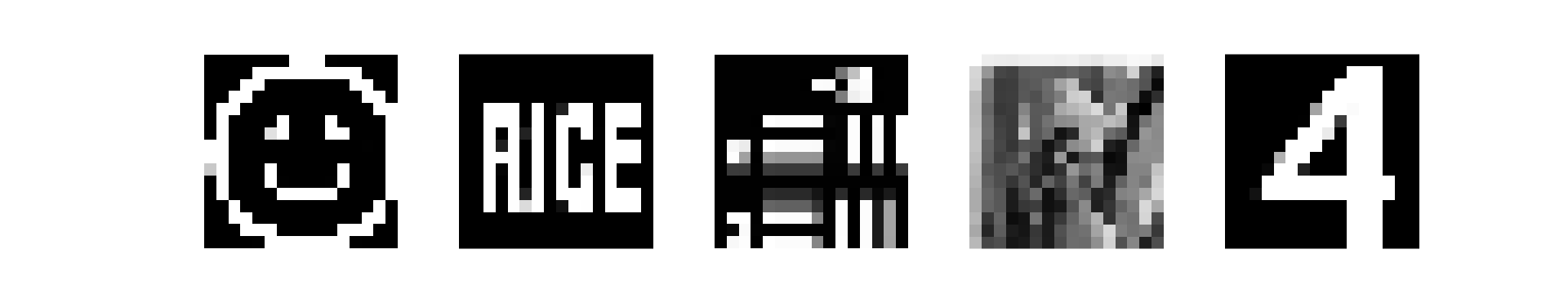}
\caption{Ground truths fed to the SLM imaging setup of \cite{metzler2017coherent}.} 
\label{fig:Chp4_Im1}
\end{figure*}

\begin{figure*}
\centering
\includegraphics[scale=0.25]{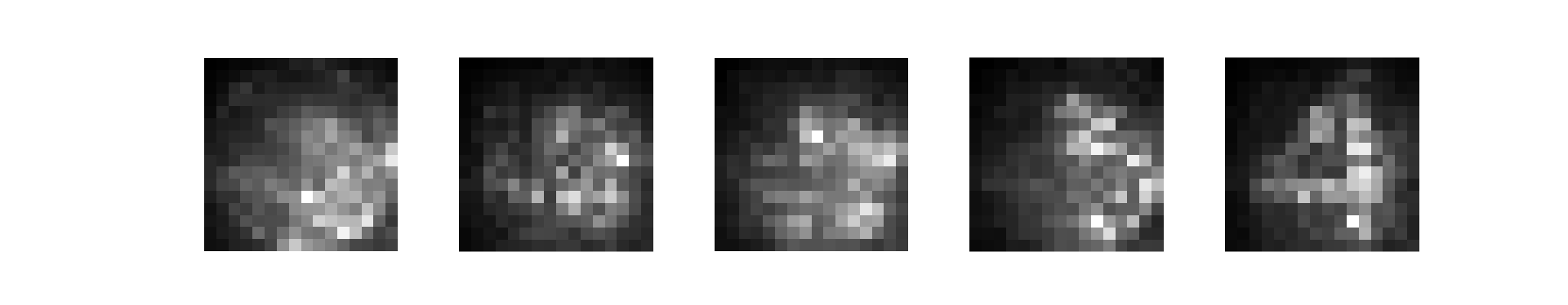}
\caption{Images reconstructed by the classical spectral method. $N = 16 \times 16$, $\alpha = 5$. Average correlations of $ \ [0.3115, 0.4805, 0.2817, 0.4507, 0.6231] \ $ in order.}
\end{figure*}

\begin{figure*}
\centering
\includegraphics[scale=0.25]{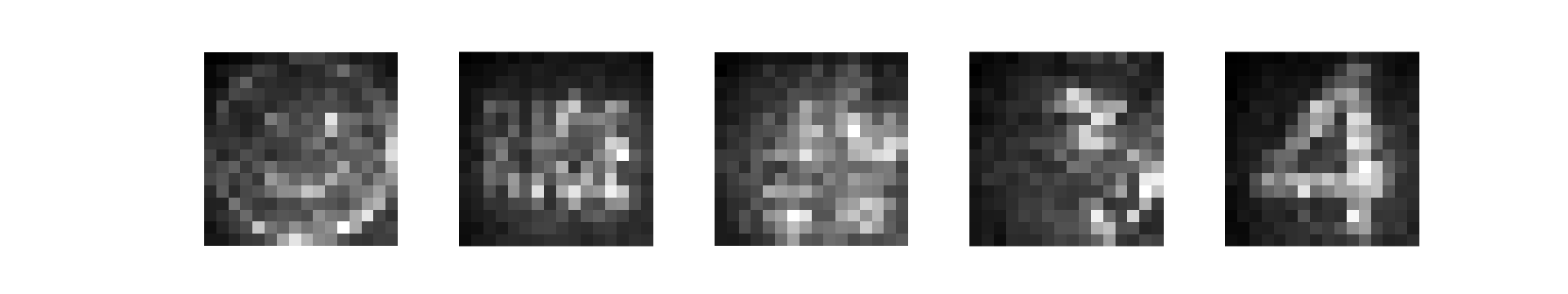}
\caption{Images reconstructed by the minimum norm solution. $N = 16 \times 16$, $\alpha = 5$. Average correlations of $ \ [0.4182, 0.6033, 0.3610, 0.5312, 0.6918] \ $ in order.}
\end{figure*}

\begin{figure*}
\centering
\includegraphics[scale=0.25]{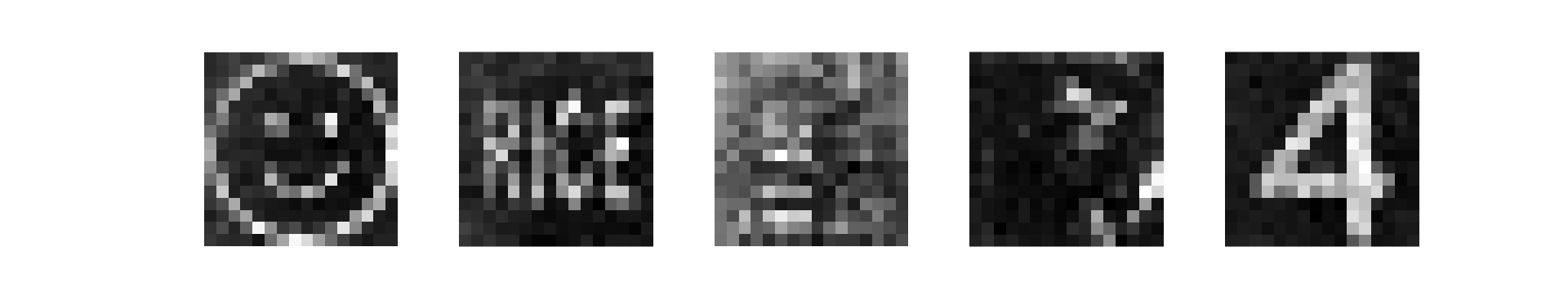}
\caption{Images reconstructed by the KL divergence minimizing spectral method. $N = 16 \times 16$, $\alpha = 5$. Average correlations of $ \ [0.5611, 0.6564, 0.3801, 0.6109, 0.7585] \ $ in order.}
\end{figure*}

\begin{figure*}
\centering
\includegraphics[scale=0.25]{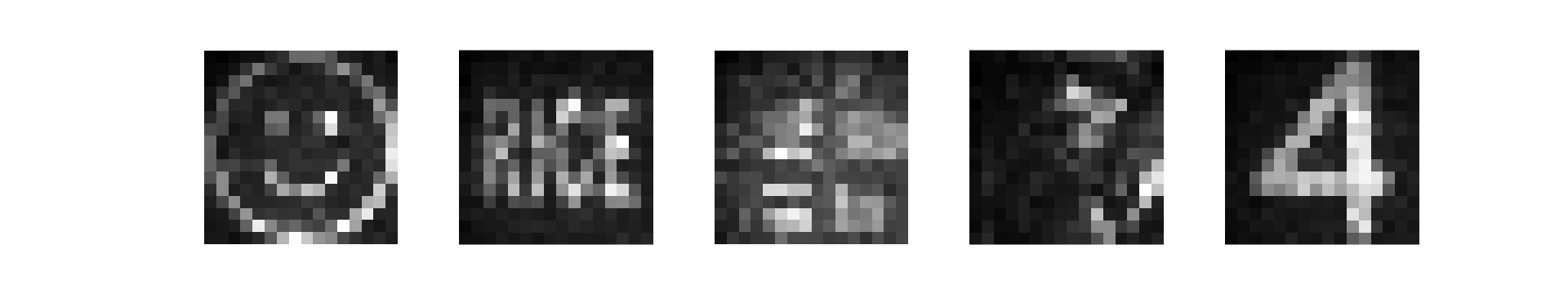}
\caption{Images reconstructed by the IS divergence minimizing spectral method. $N = 16 \times 16$, $\alpha = 5$. Average correlations of $ \ [0.5901, 0.6768, 0.3793, 0.6399, 0.7833] \ $ in order.}
\label{fig:Chp4_Im5}
\end{figure*}

Overall, our results on the smaller, near-noise free $16 \times 16$ sized problem already demonstrate considerable improvement over the state-of-the-art methods in the literature. 
We begin by noting that the classical method, its closely related counterparts of truncation, our $\ell_2$ loss formulation, and orthogonality promoting methods form the middle-of-the pack in terms of average performance. 
It is observed that orthogonality promoting methods provide slight improvement, whereas truncation and our $\ell_2$ loss formulation provide marginal improvement to the classical method. 
On the other hand, it is observed that the Gaussian optimal methods have inconsistent performance as the oversampling factor grows due to the inherent model mismatch in their derivations. 
Notably, $LL$ method, our $IS^{(0)}$ method, and the Gaussian noise optimal methods mirror each other in performance, whereas our Itakura-Saito optimal formulation with the inclusion of $\hat{\gamma}$ denoted as $IS^{opt}$ fully corrects this behavior. 
Interestingly, the Poisson noise model is successful in capturing the model mismatch from the underlying Gaussian assumption, while the Gaussian noise model fails on this end given the low residuals in estimating the sensing model. 
The $MM$ method also provides a consistent performance with its $\alpha$-adaptive, smoothed sample processing scheme, while our $KL$ and $IS$ optimal methods provide the best performance until the minimum norm solution catches up at very high oversampling factors. 
Average absolute value images also demonstrate significant gains in performance with our methods regarding geometric fidelity, and background noise suppression. 

\subsubsection{$\mathbf{40 \times 40}$ Results}

\begin{table}[!htbp]
\caption{Table of Correlations, $40 \times 40$ Images with $\alpha = 5$} 
\centering 
{
\footnotesize
  \begin{tabular}{c | c | c | c | c | c | c }
 \hline \hline
  Method & Image $1$ & Image $2$ & Image $3$ & Image $4$ & Image $5$ & Average \\
  \hline 

  Classical & $0.0387$ & $0.0164$ & $0.0277$ & $0.0712$ & $0.0346$ & $\mathbf{0.0377}$	\\
  Truncated & $0.0382$ & $0.0151$ & $0.0272$ & $0.0705$ & $0.0342$ & $\mathbf{0.0370}$ \\
  $\ell_2$-Loss & $0.0388$ & $0.0165$ & $0.0278$ & $0.0712$ & $0.0346$ & $\mathbf{0.0378}$ \\
  Orthogonal & $0.0434$ & $0.0208$ & $0.0307$ & $0.0712$ & $0.0373$ & $\mathbf{0.0411}$ \\
  Weighted & $0.0414$ & $0.0179$ & $0.0295$ & $0.0724$ & $0.0360$ & $\mathbf{0.0395} $ \\ 
  Min-Norm & $0.3429$ & $0.3460$ & $0.2475$ & $0.2608$ & $0.2012$ & $\mathbf{0.2797}$ \\
  `MM' & $0.1090$ & $0.1081$ & $0.0715$ & $0.1077$ & $0.0586$ & $\mathbf{0.0910}$ \\
  `LL' & $0.0952$ & $0.1271$ & $0.1324$ & $0.0969$ & $0.1344$ & $\mathbf{0.1172}$ \\
  Poisson & $0.0922$ & $0.0847$ & $0.0615$ & $0.0989$ & $0.0538$ & $\mathbf{0.0782}$ \\
  AWGN & $0.1571$ & $0.1806$ & $0.1723$ & $0.1369$ & $0.1425$ & $\mathbf{0.1579}$ \\
  
  \hline 
  KL & $ 0.3687$ & $0.3841$ & $0.3012$ & $0.2498$ & $0.2542$ & $\mathbf{0.3116}$ \\
  IS$^{(0)}$ & $0.3422$ & $0.3141$ & $ 0.2562$ & $ 0.1989$ & $0.1970$ & $\mathbf{0.2617}$ \\
  IS$^{opt}$ & $0.5184$ & $0.5454$ & $0.4044$ & $0.3547$ & $0.3191$ & $\mathbf{0.4284}$ \\
  \hline \hline
 
  \end{tabular}}

  \label{table:corr_coeffs_d5}

\end{table}

\begin{table}[!htbp]
\caption{Table of Correlations, $40 \times 40$ Images with $\alpha = 10$} 
\centering 
{
\footnotesize
  \begin{tabular}{c | c | c | c | c | c | c }
 \hline \hline
  Method & Image $1$ & Image $2$ & Image $3$ & Image $4$ & Image $5$ & Average \\
  \hline 

  Classical & $0.0392$ & $0.0164$ & $0.0284$ & $0.0715$ & $0.0349$ & $\mathbf{0.0381}$	\\
  Truncated & $0.0383$ & $0.0148$ & $0.0277$ & $0.0710$ & $0.0342$ & $\mathbf{0.0372}$ \\
  $\ell_2$-Loss & $0.0393$ & $0.0165$ & $0.0284$ & $0.0715$ & $0.0349$ & $\mathbf{0.0381}$ \\
  Orthogonal & $0.0438$ & $0.0207$ & $0.0312$ & $0.0737$ & $0.0372$ & $\mathbf{0.0413}$ \\
  Weighted & $0.0417$ & $0.0178$ & $0.0300$ & $0.0727$ & $0.0360$ & $\mathbf{0.0396} $ \\ 
  Min-Norm & $0.4439$ & $0.4584$ & $0.3367$ & $0.3389$ & $0.2802$ & $\mathbf{0.3716}$ \\
  `MM' & $0.0852$ & $0.0755$ & $0.0587$ & $0.0969$ & $0.0520$ & $\mathbf{0.0737}$ \\
  `LL' & $0.3104$ & $0.3685$ & $0.2855$ & $0.2222$ & $0.2179$ & $\mathbf{0.2809}$ \\
  Poisson & $0.0953$ & $0.0882$ & $0.0639$ & $0.1011$ & $0.0546$ & $\mathbf{0.0806}$ \\
  AWGN & $0.3159$ & $0.3637$ & $0.2703$ & $0.2220$ & $0.1977$ & $\mathbf{0.2739}$ \\
  
  \hline 
  KL & $ 0.5296$ & $0.5511$ & $0.4246$ & $0.3592$ & $0.3476$ & $\mathbf{0.4424}$ \\
  IS$^{(0)}$ & $0.4948$ & $0.5021$ & $ 0.3807$ & $ 0.3095$ & $0.2612$ & $\mathbf{0.3896}$ \\
  IS$^{opt}$ & $0.6349$ & $0.6658$ & $0.5078$ & $0.4456$ & $0.4172$ & $\mathbf{0.5342}$ \\
  \hline \hline
 
  \end{tabular}}

  \label{table:corr_coeffs_d10}

\end{table}

%
%
%

As noted, the results on the $40 \times 40$ sized problem serve as a good benchmark to assess the robustness of the evaluated methods with respect to model errors and noise. 
This is due to the larger residual in estimating the sensing matrix of the problem, which is reflected by the ML-parameters estimated for the Poisson and Gaussian noise models. 
Regarding our formulations, we do not invoke any noise dependent processing, and essentially assess the robustness of our framework with the chosen information theoretic metrics. 
The correlation coefficients averaged over our the provided $5$ test images are summarized in Table \ref{table:corr_coeffs_d5}, at oversampling factors of $\alpha = 5$ and $\alpha = 10$ in the Monte Carlo simulation, in the first and second rows, respectively. 
Empirical average correlations clearly indicate that the model mismatch drastically impacts the classical method and its closely related counterparts, as well as orthogonality promoting methods. 
With the increased model error, Gaussian optimal methods also demonstrate significant decline in performance, where $MM$, $LL$, and its noise model optimal counterparts are all discouragingly low compared to the $16 \times 16$ case, with the Gaussian noise model providing the best average correlation with the benefit of accurately modeling the residuals. 
Overall, the experiments highlight $3$ of the methods as highly functional for a model that is non-Gaussian even in the presence of modeling errors: our optimal $KL$ and $IS$ divergence minimizing methods, and the minimum norm solution. 
Ultimately, the $LL$ optimal method shows significant improvement as well as its Gaussian noise model optimal method, and our $IS^{(0)}$ formulation as the number of measurements increase to $10$, which can be interpreted within their fundamental relation to the Itakura-Saito divergence metric. 
In the end, the key advantage of our approach compared to the minimum norm solution is its superior performance at lower oversampling factors, and overall computational complexity, as the latter involves the inversion of an $M \times M$ system of equations which dominates the procedure as $\alpha = M/N$ is increased.

\section{Conclusion}\label{sec:4_sec6}

In this paper, we undertake the spectral initialization as an estimation task, and formulate a framework for optimally designing spectral methods by leveraging information theoretic distortion metrics over phaseless measurements. 
We begin by deconstructing fundamental principles of spectral methods in the literature, and rigorously assess the statistical foundations, and successful heuristics within a geometric interpretation. 
Our approach leads to the identification of inherent mechanisms that are expected to promote good initial estimates in prominent methods from phase retrieval literature, which we formalize under an approximate loss minimization perspective where the sample processing function has the duty of warping the traditional $\ell_2$ alignment metric to more favorable similarity measures. 
Thereon, we build a general framework by utilizing Bregman divergences, which through universal properties facilitates a systematic approach to the design of spectral methods by the principle of Bregman representation and minimal distortion. 
Towards designing novel spectral methods, we consider well established information theoretic metrics from statistical signal processing and speech recognition literature, which naturally measure distortion between strictly positive information, namely, KL divergence with probability distributions, and Itakura-Saito divergence with power spectral densities. 
We then consider estimation from phaseless measurements as analogues of these settings under proper normalization via strict positivity, and develop KL divergence and IS divergence minimizing spectral methods under our minimal Bregman representation principle. 
Furthermore, we establish that the recent work on optimal sample processing functions in the Gaussian model is an instance of our formulation, where global optimality in the Gaussian setting is implied due to asymptotic optimality of the Itakura Saito distortion in discriminating zero mean Gaussian processes. 
Finally, we present numerical simulations which demonstrate that our formalism consistently outperforms the methods in the literature, and support our theoretical on Gaussian sampling and real optical imaging settings.

\bibliographystyle{IEEEtran}
{
\bibliography{refs,ref1_alt,ref1_v2}}

\end{document}